\def\eqref#1{equation~\ref{#1}}
\def\1{\bm{1}}
\DeclareMathAlphabet{\mathsfit}{\encodingdefault}{\sfdefault}{m}{sl}
\SetMathAlphabet{\mathsfit}{bold}{\encodingdefault}{\sfdefault}{bx}{n}
\newcommand{\R}{\mathbb{R}}
\newtheorem{theorem}{Theorem}
\newtheorem{conjecture}[theorem]{Conjecture}
\newtheorem{definition}[theorem]{Definition}
\newtheorem{proposition}[theorem]{Proposition}
\newtheorem{lemma}[theorem]{Lemma}
\theoremstyle{remark}
\renewcommand{\epsilon}{\varepsilon} 
\newcommand{\simadd}{\sim_A}
\newcommand{\simremove}{\sim_R}
\newcommand{\simaddremove}{\sim_{A/R}}
\newcommand{\simsubstitute}{\sim_S}
\newcommand{\M}{\mathcal{M}} 
\newcommand{\D}{D} 
\newcommand{\Mpoisson}{\mathcal{M}_{Poisson}} 
\newcommand{\Msubset}{\mathcal{M}_{WOR}} 
\newcommand{\normal}{\mathcal{N}}
\newcommand{\Lap}{\mathrm{Lap}}
\newcommand{\0}{\mathbf{0}}
\DeclareMathOperator{\Proc}{Proc}
\definecolor{BlueViolet}{rgb}{0, 0, 0.55}
\definecolor{RubineRed}{rgb}{0.88, 0.07, 0.37}
\definecolor{ForestGreen}{rgb}{0.13, 0.55, 0.13}
\definecolor{Blue}{rgb}{0.0, 0.0, 1.0}
\definecolor{NavyBlue}{rgb}{0.0, 0.0, 0.5}
\definecolor{Black}{rgb}{0.02, 0.02, 0.02}
\definecolor{MidnightBlue}{rgb}{0.0, 0.2, 0.4}
\definecolor{Gray}{rgb}{0.41, 0.41, 0.41}
\definecolor{TealBlue}{rgb}{0.212,0.459,0.533}
\definecolor{Plum}{rgb}{0.6,0.25,0.6}
\title{
    Avoiding Pitfalls for Privacy Accounting of Subsampled Mechanisms under Composition\thanks{Authors CJL and MR are listed in alphabetical order. Authors GK and TS are listed in alphabetical order.}
}
\author{~~~~~~~~~~~~~Christian Janos Lebeda\thanks{Inria, University of Montpellier. \texttt{christian.j.lebeda@gmail.com}.}
    \and
    Matthew Regehr\thanks{University of Waterloo. \texttt{matt.regehr@uwaterloo.ca}.}~~~~~~~~~~~~~ 
    \and
    Gautam Kamath\thanks{University of Waterloo and Vector Institute. \texttt{g@csail.mit.edu}.}
    \and
    Thomas Steinke\thanks{Google DeepMind. \texttt{steinke@google.com}.}
}
\begin{document}

\maketitle

\begin{abstract}
    We consider the problem of computing tight privacy guarantees for the composition of subsampled differentially private mechanisms.
    Recent algorithms can numerically compute the privacy parameters to arbitrary precision but must be carefully applied.

    Our main contribution is to address two common points of confusion.
    First, some privacy accountants assume that the privacy guarantees for the composition of a subsampled mechanism are determined by self-composing the worst-case datasets for the uncomposed mechanism. 
    We show that this is not true in general.
    Second, Poisson subsampling is sometimes assumed to have similar privacy guarantees compared to sampling without replacement.
    We show that the privacy guarantees may in fact differ significantly between the two sampling schemes.
    In particular, we give an example of hyperparameters that result in $\varepsilon \approx 1$ for Poisson subsampling and $\varepsilon > 10$ for sampling without replacement. 
    This occurs for some parameters that could realistically be chosen for DP-SGD.
\end{abstract}

\section{Introduction}

A fundamental property of differential privacy is that the composition of multiple differentially private mechanisms still satisfies differential privacy.
This property allows us to design complicated mechanisms with strong formal privacy guarantees such as differentially private stochastic gradient descent (DP-SGD, \cite{SongCS13,BassilyST14,AbadiCGMMTZ16}).

The privacy guarantees of a mechanism inevitably deteriorate with the number of compositions.
Accurately quantifying the privacy parameters 
under composition is highly non-trivial and is an important area within the field of differential privacy.
A common approach is to find the privacy parameters for each part of a mechanism and apply a composition theorem~\cite{DworkRV10, KairouzOV15} to find the privacy parameters of the full mechanism.
In recent years, several alternatives to the traditional definition of differential privacy with cleaner results for composition have gained popularity~(see, e.g., \cite{DworkR16,BunS16,Mironov17,DongRS19}).

Another important concept is privacy amplification by subsampling~(see, e.g., \cite{BalleBG18, steinkeCompositionAmplification}).
The general idea is to improve privacy guarantees by only using a randomly sampled subset of the full dataset as input to a mechanism. 
In this work we consider the problem of computing tight privacy parameters for subsampled mechanisms under composition.

One of the primary motivations for studying privacy accounting of subsampled mechanisms is DP-SGD.
DP-SGD achieves privacy by clipping gradients and adding Gaussian noise to each batch. 
As such, we can find the privacy parameters by analyzing the subsampled Gaussian mechanism under composition.
One of the key contributions of \cite{AbadiCGMMTZ16} was the moments accountant, which gives tighter bounds for the mechanism than the generic composition theorems.
Later work improved the accountant by giving improved bounds on the R{\'e}nyi Differential Privacy guarantees of the subsampled Gaussian mechanism under both Poisson subsampling and sampling without replacement~\cite{mironov2019renyi-addremove-poisson, wang2018renyi-substitute-wor}.

Even small constant factors in an $(\varepsilon,\delta)$-DP budget are important.
First, from the definition, such constant factors manifest exponentially in the privacy guarantee.
Furthermore, when training a model privately with DP-SGD, it has been observed that they can lead to significant differences in the downstream utility, see, e.g., Figure 1 of \cite{DeBHSB22}.
Consequently, ``saving'' such a factor in the value of $\varepsilon$ through tighter analysis can be very valuable.
While earlier \emph{approximate} techniques for privacy accounting (e.g., moments accountant of~\cite{AbadiCGMMTZ16} and related methods) were lossy, a more recent line of work focuses on \emph{exact} computation of privacy loss by numerically estimating the privacy parameters~\cite{sommerPLD, koskela20-gaussian-variants, koskela21-one-dimensional-outputs, GopiLW21, zhu22-optimal-characteristic-functions}.
These accountants generally look at the ``worst case'' for a single iteration for a privacy mechanism, and then use a fast Fourier transform (FFT) to compose the privacy loss over multiple iterations. 
They often rely on an implicit assumption that the worst-case pair of datasets for a single execution of a privacy mechanism remains the worst case for a self-composition of the mechanism.

Most privacy accounting techniques for DP-SGD assume a version of the algorithm that employs amplification by \emph{Poisson} subsampling.
That is, the batch for each iteration is formed by including each point independently with sampling probability $\gamma$. 
Other privacy accountants consider a variant where random batches of a fixed size are selected for each step.
Note that both of these are inconsistent with the standard method in the non-private setting, where batches are formed by randomly permuting and then partitioning the dataset. 
Indeed, the latter approach is much more efficient, and highly-optimized in most libraries.
Consequently, many works in private machine learning implement a method with the conventional shuffle-and-partition method of batch formation, but employ privacy accountants that assume another method of sampling batches. 
The hope is that small modifications of this sort would have negligible impact on the privacy analysis, justifying privacy accountants for a setting which is \emph{technically} not matching. 

Concurrent work to this paper by \cite{chua2024private} compares the shuffle-and-partition technique with Poisson subsampling.
Similarly to our results, they find that the batching method can significantly impact the privacy parameters. The main difference is that the concurrent work considers the shuffling sampling scheme, whereas we focus on sampling without replacement in comparison to Poisson subsampling. Moreover, we also consider multiple neighbouring relations and show that some, substitution in particular, are poorly understood. The concurrent work focuses on an idealized neighbouring relation (zero-one-out) for their sampling scheme.

The central aim of our paper is to highlight and clarify some common problems with privacy accounting techniques.
Towards the goal of more faithful comparisons between private algorithms that rely upon such accountants, we make the following contributions:
\begin{itemize}
    \item In Sections~\ref{sec:wc-datasets} and \ref{sec:addvsremove}, we establish that a worst-case dataset may exist for a single execution of a privacy mechanism but may fail to exist when looking at the self-composition of the same mechanism.
    Some popular privacy accountants incorrectly assume otherwise. Our counterexample involves the subsampled Laplace mechanism, and stronger analysis is needed to demonstrate the soundness of privacy accountants for specific mechanisms, e.g., the subsampled Gaussian mechanism.
    \item In Section~\ref{sec:poissonvswor}, we show that rigorous privacy accounting is \emph{significantly} affected by the method of sampling batches, e.g., Poisson versus fixed-size. 
    This results in sizeable 
    differences in the resulting privacy guarantees for settings which were previously treated as interchangeable by prior works.
    Consequently, we caution against the common practice of using one method of batch sampling and employing the privacy accountant for another.
    \item
    In Section~\ref{sec:substitution}, we discuss issues that arise in tight privacy accounting under the ``substitution'' relation for neighbouring datasets, which make this setting even more challenging than under the traditional ``add/remove'' relation.
    Once again we consider the subsampled Laplace mechanism and show that there may be several worst-case datasets one must consider when doing accounting, exposing another important gap in existing analyses.
    \item Lastly, in Section~\ref{sec:connect-the-dots-approach}, we apply the technique introduced by \cite{DoroshenkoGKKM22} to construct a privacy accountant for the subsampled Gaussian mechanism in the setting of the substitution neighbouring relation. 
    The accountant leverages the dominating pairs framework introduced by \cite{zhu22-optimal-characteristic-functions}. 
    We show that this accountant does not generally outperform the RDP accountant. 
    This demonstrates the need to strengthen the theory for sampling without replacement under the substitution relation for the purposes of tight privacy accounting.
\end{itemize}

\section{Preliminaries}
\label{sec:prelim}

Differential privacy is a rigorous privacy framework introduced by~\cite{DworkMNS06}. 
Differential privacy is a restriction on how much the output distribution of a mechanism can change between any pair of datasets that differ only in a single individual.
Such datasets are called neighbouring, and we denote a pair of neighbouring datasets as $\D \sim \D'$.
We formally define neighbouring datasets below.

\begin{definition}[$(\varepsilon, \delta)$-Differential Privacy]
    \label{def:approx-dp}
    A randomized mechanism $\M$ satisfies $(\varepsilon, \delta)$-DP under neighbouring relation $\sim$ if and only if for all $\D \sim \D'$ and all measurable sets of outputs $Z$ we have
    \[
    \Pr[\M(\D) \in Z] \leq e^\varepsilon \Pr[\M(\D') \in Z] + \delta.
    \]
\end{definition}

In this work, we consider problems where we want to estimate a sum for $k$ queries where each data point holds a single-dimensional real value in the interval $[-1, 1]$ for each query. 
The mechanisms we consider apply more generally to multi-dimensional real-valued queries.
Since we demonstrate issues already present in the former more restrictive setting, these pitfalls are present in the more general case as well. 
We focus on single-dimensional inputs for simplicity of presentation.
Likewise, by considering mechanisms defined on $[-1, 1]$, our privacy analysis immediately extends to any mechanism defined on $\R$ that clips to $[-1, 1]$.
After the appropriate rescaling, our privacy analysis extends to any mechanism used in practice for DP-SGD.
Note that in all but one example in Section~\ref{sec:substitution} the data points hold the same value for all $k$ queries for the datasets we consider.
We abuse notation and represent each data point as a single real value rather than a vector.

On the domain $[-1, 1]^{* \times k} := \bigcup_{m = 0}^\infty [-1, 1]^{m \times k}$, we define the neighbouring definitions of add, remove, and substitution (replacement).
We typically want the neighbouring relation to be symmetric, which is why add and remove are typically included in a single definition. 
However, as noted by previous work we need to analyze the add and remove cases separately to get tight results~(see, e.g., \cite{zhu22-optimal-characteristic-functions}).

\begin{definition}[Neighbouring Datasets]
    \label{def:neighbouring-datasets}
    Let $\D$ and $\D'$ be datasets. If $\D'$ can be obtained by adding a data point to $\D$, then we write $\D \simadd \D'$. Likewise, if $\D'$ can be obtained by removing a data point from $\D$, then we write $\D \simremove \D'$. Combining these, write $\D \simaddremove \D'$ if $\D \simadd \D'$ or $\D \simremove \D'$. Finally, we write $\D \simsubstitute \D'$ if $\D$ can be obtained from $\D'$ by swapping one data point for another.
\end{definition}

Note that differential privacy under add and remove implies differential privacy under substitution, with appropriate translation of the privacy parameters.


Definition~\ref{def:approx-dp} can be restated in terms of the hockey-stick divergence.

\begin{definition}[Hockey-stick Divergence]
\label{def:hockey}
For any $\alpha \geq 0$ the hockey-stick divergence between two distributions $P$ and $Q$ is defined as
\begin{align*}
H_\alpha &(P || Q) := \mathbb{E}_{y \sim Q}\left[\max\left\{\frac{dP}{dQ}(y) - \alpha, 0\right\}\right]
\end{align*}
where $\frac{dP}{dQ}$ is the Radon–Nikodym derivative.
\end{definition}
Specifically, a randomized mechanism $\M$ satisfies $(\varepsilon, \delta)$-DP if and only if $H_{e^\varepsilon} (\M(\D) || \M(\D')) \leq \delta$ for all pairs of neighbouring datasets $\D \sim \D'$.
This restated definition is the basis for the privacy accounting tools we consider in this paper.
If we know what choice of neighbouring datasets $\D \sim \D'$ maximizes the expression then we can get optimal parameters by computing $H_{e^\varepsilon} (\M(\D) || \M(\D'))$.

The full range of privacy guarantees for a mechanism can be captured by the privacy curve.

\begin{definition}[Privacy Curves]
    \label{def:privacy-curve}
    The privacy curve of a randomized mechanism $\M$ under neighbouring relation $\sim$ is the function $\delta_\M^{\sim} : \R \to [0, 1]$ given by
    \begin{align*}
        \delta_\M^{\sim}(\varepsilon) := \min\{\delta \in [0, 1] : \text{$\M$ is $(\varepsilon, \delta)$-DP}\}.
    \end{align*}
    If there is a single pair of neighbouring datasets $\D \sim \D'$ such that $\delta_\M^{\sim}(\varepsilon) = H_{e^\varepsilon}(\M(\D) || \M(\D'))$ for all 
    $\varepsilon \geq 0$,
    we say that the privacy curve of $\M$ under $\sim$ is realized by the worst-case dataset pair $(\D, \D')$.
\end{definition}

Note that $(\D,\D')$ in the definition above is an ordered pair and differs from $(\D',\D)$.
Unfortunately, a worst-case dataset pair does not always exist. A broader tool that is now frequently used in the computation of privacy curves is the privacy loss distribution (PLD) formalism \cite{DworkR16,sommerPLD}.
\begin{definition}[Privacy Loss Distribution]
\label{def:priv-loss}
    Given a mechanism $\M$ and a pair of neighbouring datasets $\D \sim \D'$, the privacy loss distribution of $\M$ with respect to $(\D, \D')$ is
    \[
      L_\M(\D || \D') := \ln(d\M(\D)/d\M(\D'))(y),
    \]
    where $y \sim \M(D)$ and $d\M(\D)/d\M(\D')$ means the density of $\M(\D)$ with respect to $\M(\D')$.

\end{definition}

An important caveat is that the privacy loss distribution is defined with respect to a specific pair of datasets, whereas the privacy curve implicitly involves taking a maximum over all neighbouring pairs of datasets. Nonetheless, the PLD formalism can be used to recover the hockey-stick divergence via
\[
    H_{e^\varepsilon}(\M(D) || \M(D'))
        = \mathbb{E}_{Y \sim L_\M(D || D')}[\max\{1 - e^{\varepsilon - Y}, 0\}],
\]
from which we can reconstruct the privacy curve as
\[
    \delta_\M^\sim(\epsilon)
        = \max_{D \sim D'} \mathbb{E}_{Y \sim L_\M(D || D')}[\max\{1 - e^{\varepsilon - Y}, 0\}].
\]

Lastly, we define the two subsampling procedures we consider in this work: sampling without replacement (WOR) and Poisson sampling. 
Given a dataset $D = (x_1, \dots, x_n)$ and a set $I \subseteq \{1, \dots, n\}$, we denote the restriction of $D$ to $I = \{i_1, \dots, i_b\}$ by $D|_I := (x_{i_1}, \dots, x_{i_b})$.
\begin{definition}[Subsampling]
    Let $\M$ take datasets of size\footnote{We treat the sample size and expected batch size as public knowledge in line with prior work \cite{zhu22-optimal-characteristic-functions}. We discuss some practical challenges at the end of the paper.} $b \geq 1$. The $\binom{n}{b}$-subsampled mechanism $\Msubset$ is defined on datasets of size $n \geq b$ as
    \begin{align*}
        \Msubset(D) := \M(D|_I),
    \end{align*}
    where $I$ is a uniform random $b$-subset of $\{1, \dots, n\}$.

    On the other hand, given a mechanism $\M$ taking datasets of any size, the $\gamma$-subsampled mechanism $\Mpoisson$ is defined on datasets of arbitrary size as
    \begin{align*}
        \Mpoisson(D) := \M(D|_I),
    \end{align*}
    where $I$ includes each element of $\{1, \dots, |D|\}$ independently with probability $\gamma$.
\end{definition}

\section{Related Work}

After \cite{DworkR16} introduced privacy loss distributions, a number of works used the formalism to estimate the privacy curve to arbitrary precision, beginning with \cite{sommerPLD}. \cite{koskela20-gaussian-variants, koskela21-one-dimensional-outputs} developed an efficient accountant that efficiently computes the convolution of PLDs by leveraging the fast Fourier transform. \cite{GopiLW21} fine-tuned the application of FFT to speed up the accountant by several orders of magnitude.

The most relevant related paper for our work is by \cite{zhu22-optimal-characteristic-functions}.
They introduce the concept of a dominating pair of distributions.
Dominating pairs generalize worst-case datasets, which for some problems can be difficult to find and may not even exist.


\begin{definition} [Dominating Pair of Distributions \cite{zhu22-optimal-characteristic-functions}]
\label{def:dominating-pair}
The ordered pair $(P, Q)$ is a dominating pair of distributions for a mechanism $\M$ (under some neighbouring relation $\sim$) if for all $\alpha \geq 0$ it holds that
\[
\sup_{D \sim D'} H_\alpha(\M(D) || \M(D')) \leq H_\alpha(P || Q).
\]
\end{definition}

The hockey-stick divergence of the dominating pair $P$ and $Q$ gives an upper bound on the value $\delta$ for any $\varepsilon$.
Note that the distributions $P$ and $Q$ do not need to be output distributions of the mechanism. 
However, if there exists a pair of neighbouring datasets such that $P = \M(\D)$ and $Q = \M(\D')$ then we can find tight privacy parameters by analyzing the mechanisms with inputs $\D$ and $\D'$ because $H_{e^\varepsilon}(\M(\D) || \M(\D'))$ is also a lower bound on $\delta$ for any $\varepsilon$.
We refer to such $\D \sim \D'$ as a dominating pair of datasets.

The definition of dominating pairs of distributions is useful for analyzing the privacy guarantees of composed mechanisms. 
In this work, we focus on the special case where a mechanism consists of $k$ self-compositions.
This is, for example, the case in DP-SGD, in which we run several iterations of the subsampled Gaussian mechanism.
The property we need for composition is presented in Theorem~\ref{thm:dominating-pair-dominates-composition}.


\begin{theorem}[Following Theorem~10 of \cite{zhu22-optimal-characteristic-functions}]
    \label{thm:dominating-pair-dominates-composition}
    If $(P, Q)$ is a dominating pair for a mechanism $\M$ then $(P^k, Q^k)$ is a dominating pair for $k$ iterations of $\M$.
\end{theorem}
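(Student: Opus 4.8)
The plan is to prove the statement by induction on $k$, reducing to the case $k=2$, and for that case to characterize the hockey-stick divergence $H_\alpha(P^2 \| Q^2)$ in a way that makes the domination property transparent under composition. The main tool will be a characterization of hockey-stick divergence in terms of a supremum over a family of functions: for distributions $P, Q$ and $\alpha \ge 0$,
\[
H_\alpha(P \| Q) = \sup_f \left( \E_{y \sim P}[f(y)] - \alpha \, \E_{y \sim Q}[f(y)] \right),
\]
where the supremum ranges over measurable $f$ taking values in $[0,1]$. This variational form is the key because it linearizes the divergence in a way compatible with the product structure.

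First I would record the induction step: assuming $(P^{k-1}, Q^{k-1})$ dominates $\M^{k-1}$ (the $(k-1)$-fold composition) and $(P,Q)$ dominates $\M$, it suffices to show that if $(P_1, Q_1)$ dominates mechanism $\M_1$ and $(P_2, Q_2)$ dominates $\M_2$, then $(P_1 \times P_2, Q_1 \times Q_2)$ dominates the (adaptive) composition $\M_1 \circ \M_2$. Here one must be a little careful about whether the composition is adaptive, i.e. whether $\M_2$ may depend on the output of $\M_1$; for DP-SGD it is adaptive, and Theorem 10 of \cite{zhu22-optimal-characteristic-functions} handles this, so I would state the domination condition for each fixed value of the first output. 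Concretely, the output of the composed mechanism on $D \sim D'$ has density ratio $\frac{d\M_1(D)}{d\M_1(D')}(y_1) \cdot \frac{d\M_2^{y_1}(D)}{d\M_2^{y_1}(D')}(y_2)$, and one wants to bound $H_\alpha$ of this against $H_\alpha(P_1 \times P_2 \| Q_1 \times Q_2)$.

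The heart of the argument is then: for fixed neighboring $D \sim D'$, bound $H_\alpha(\M_1\M_2(D) \| \M_1\M_2(D'))$ by conditioning on the first coordinate $y_1$. Writing $A = \M_1(D)$, $B = \M_1(D')$ and, for each $y_1$, $C_{y_1} = \M_2^{y_1}(D)$, $E_{y_1} = \M_2^{y_1}(D')$, the divergence $H_\alpha$ of the product becomes an integral over $y_1 \sim B$ of $\max\{ \tfrac{dA}{dB}(y_1) \cdot (\text{inner ratio}) - \alpha, 0\}$ against $y_2 \sim E_{y_1}$. The trick (this is the standard argument behind advanced composition for dominating pairs) is to factor the threshold: $\max\{rs - \alpha, 0\} \le r \cdot \max\{s - \alpha/r, 0\} + \dots$ — more cleanly, one uses that for fixed $y_1$ the inner integral is $\le H_{\alpha/r(y_1)}(C_{y_1} \| E_{y_1}) \cdot r(y_1)$ type relation is not quite linear, so instead I would use the variational characterization above: pick the optimal $[0,1]$-valued test function for the product distributions, integrate out $y_2$ first to get a $[0,1]$-valued function of $y_1$, apply domination of $(P_1, Q_1)$ over $\M_1$ to that, then apply domination of $(P_2, Q_2)$ over $\M_2$ to the inner part for each $y_1$. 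This two-step peeling is where I expect the real work to be, and the adaptivity means the $\M_2$-step must hold uniformly in $y_1$, which is exactly what Definition~\ref{def:dominating-pair} gives since it is a supremum over neighboring pairs.

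The main obstacle, then, is handling the interaction between the threshold $\alpha$ and the product structure cleanly — a naive conditioning splits $\alpha$ in a $y_1$-dependent way and one must check the bookkeeping closes. I expect the cleanest path is to avoid the density-ratio manipulation entirely and work purely with the variational formula: the supremum over $[0,1]$-valued $f$ on the product space, restricted to product-form test functions, already suffices because conditional expectation maps $[0,1]$-valued functions to $[0,1]$-valued functions, and one can apply the domination inequality coordinate-by-coordinate. Finally, I would note the mild measure-theoretic caveat that $P^k, Q^k$ should be mutually absolutely continuous in the relevant sense (or one should phrase everything via $H_\alpha$ directly, which is well-defined regardless), and that the case of general (non-self) composition follows by the identical argument, with the self-composition statement of the theorem being the special case $\M_1 = \dots = \M_k = \M$ and $(P_i, Q_i) = (P, Q)$.
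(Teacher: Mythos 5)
First, note that the paper does not prove this statement at all: it is imported verbatim (``Following Theorem~10 of \cite{zhu22-optimal-characteristic-functions}'') and used as a black box, so the only fair comparison is against the standard proof of that theorem. Your overall plan matches that proof's structure --- induct to reduce to a single (adaptive) two-fold composition, condition on the first output $y_1$, and ``peel'' the two coordinates by applying domination of $(P_2,Q_2)$ uniformly in $y_1$ and then domination of $(P_1,Q_1)$. You also correctly identify the crux: after conditioning, the threshold becomes $y_1$-dependent. But the device you propose to close the bookkeeping --- working with the variational form $H_\alpha(P\|Q)=\sup_{0\le f\le 1}\E_P[f]-\alpha\E_Q[f]$ and restricting to \emph{product-form} test functions --- does not work. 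The supremum over product test functions can be strictly smaller than the supremum over all $[0,1]$-valued $f$ (the maximizer is the indicator of $\{\text{density ratio}>\alpha\}$, which is not a product set), so bounding it would not bound $H_\alpha$ of the composed mechanism; and even for product $f$ the inner conditional expectations under $\M_2^{y_1}(\D)$ and $\M_2^{y_1}(\D')$ are \emph{different} functions of $y_1$, so you still cannot feed them into the variational characterization of $H_\alpha(P_1\|Q_1)$, which requires the same test function on both sides.

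The step that actually closes the argument is the one you almost wrote and then abandoned: for fixed $y_1$ with densities $a(y_1),b(y_1)$ of $\M_1(\D),\M_1(\D')$, integrate out $y_2$ to get exactly $a(y_1)\,H_{\alpha b(y_1)/a(y_1)}\bigl(\M_2^{y_1}(\D)\,\|\,\M_2^{y_1}(\D')\bigr)$, apply the domination hypothesis at the $y_1$-dependent order $\alpha b(y_1)/a(y_1)\ge 0$ to replace the inner pair by $(P_2,Q_2)$, re-assemble the integral as $H_\alpha(\M_1(\D)\times P_2\,\|\,\M_1(\D')\times Q_2)$, and then repeat the same manipulation in the $y_1$ coordinate to replace $(\M_1(\D),\M_1(\D'))$ by $(P_1,Q_1)$. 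This is where it is essential that Definition~\ref{def:dominating-pair} demands domination for \emph{all} $\alpha\ge 0$, including $\alpha<1$, since $\alpha b(y_1)/a(y_1)$ ranges over all of $[0,\infty)$ --- precisely the subtlety the paper flags when it remarks that Theorem~\ref{thm:dominating-pair-dominates-composition} would fail if dominating pairs were only required to control $\alpha\ge 1$. Your write-up gestures at this uniformity but never establishes it, and your proposed shortcut is not a valid substitute, so as it stands the proof has a genuine gap at its central step.
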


When studying differential privacy parameters in terms of the hockey-stick divergence, we usually focus on the case of $\alpha \geq 1$. 
Recall that the hockey-stick divergence of order $\alpha$ can be used to bound the value of $\delta$ for an $(\varepsilon, \delta)$-DP mechanism where $\varepsilon = \ln(\alpha)$.
We typically do not care about the region of $\alpha < 1$ because it corresponds to negative values of $\varepsilon$.
One might therefore hope that we could show an equivalent to Theorem~\ref{thm:dominating-pair-dominates-composition} even when a pair of distributions do not bound the divergence for $\alpha < 1$.
However, the definition of dominating pairs of distributions must include these values as well.
This is because outputs with negative privacy loss are important for composition and Theorem~\ref{thm:dominating-pair-dominates-composition} would not hold if the definition only considered $\alpha \geq 1$.
In Sections~\ref{sec:addvsremove} and~\ref{sec:substitution} we consider mechanisms where the distributions that bound the hockey-stick divergence for $\alpha \geq 1$ without composition do not bound the divergence for $\alpha \geq 1$ under composition.

\cite{zhu22-optimal-characteristic-functions} studied general mechanisms in terms of dominating pairs of distributions under Poisson subsampling and sampling without replacement.
Their work gives upper bounds on the privacy parameters based on the dominating pair of distributions of the non-subsampled mechanism.
We use some of their results which we introduce later throughout this paper.

\section{Dominating Pair of Datasets for Add and Remove Relations}
\label{sec:wc-datasets}
In this section we give pairs of neighbouring datasets with provable worst-case privacy parameters for the add and remove neighbouring relations separately.
We use these datasets as examples of pitfalls to avoid in the subsequent section, where we discuss the combined add/remove neighbouring relation.


\begin{proposition}
    \label{prop:wc_for_add_remove_separately}
    Let $\M$ be either the Gaussian mechanism $\M(x_1, \dots, x_n) := \sum_{i = 1}^n x_i + \normal(0, \sigma^2)$ or the Laplace mechanism $\M(x_1, \dots, x_n) := \sum_{i = 1}^n x_i + \Lap(0, s)$. 
    \begin{enumerate}
        \item The datasets $\D := (0, \dots, 0)$ and $\D' := (0, \dots, 0, 1)$ form a dominating pair of datasets for $\Mpoisson$ under the add relation and $(\D', \D)$ is a dominating pair of datasets under the remove relation.
        \item Likewise, the datasets $\D := (-1, \dots, -1)$ and $\D' := (-1, \dots, -1, 1)$ form a dominating pair of datasets for $\Msubset$ under the add relation and $(\D', \D)$ is a dominating pair of datasets under the remove relation.
    \end{enumerate}
\end{proposition}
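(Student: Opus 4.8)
The plan is to prove all four claims (Poisson/WOR $\times$ add/remove) by a single template, which I spell out for the Poisson add case and then adapt. Since $\M$ is a location family (a fixed noise added to the sum of the entries), the law of $\M$ on a dataset depends only on the sum of its entries; I write $\phi_\mu$ for this law when the sum equals $\mu$, so $\phi_\mu(\cdot) = \phi_0(\cdot - \mu)$. I will use that $H_\alpha(P\|Q) = \int \max\{p(y) - \alpha q(y),0\}\,dy$, that $H_\alpha$ is jointly convex (being an $f$-divergence), so $H_\alpha(\mathbb{E}_\theta[P_\theta]\,\|\,\mathbb{E}_\theta[Q_\theta]) \le \mathbb{E}_\theta[H_\alpha(P_\theta\|Q_\theta)]$, and that $H_\alpha$ is invariant under a common translation of its two arguments.

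\emph{Step 1: reduce to the differing point.} Fix neighbors $E \simadd E'$ with $E' = E \cup \{x\}$, $|x|\le 1$. Coupling the two subsamplings through a common draw $S$ of the subsample of the shared points (and, for $E'$, one extra independent $\gamma$-coin for $x$) gives $\Mpoisson(E) = \mathbb{E}_S[\phi_{\mu_S}]$ and $\Mpoisson(E') = \mathbb{E}_S[(1-\gamma)\phi_{\mu_S} + \gamma\phi_{\mu_S+x}]$, with $\mu_S$ the sampled sum. Joint convexity over $S$ and translating each term by $-\mu_S$ yield $H_\alpha(\Mpoisson(E)\|\Mpoisson(E')) \le \sup_{|x|\le1} H_\alpha(\phi_0 \,\|\, (1-\gamma)\phi_0 + \gamma\phi_x)$. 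For the WOR add case I treat the dataset size as public, so the larger member of an add-neighbor pair has a fixed size $n+1$: a uniform $b$-subset of $E' = E\cup\{x\}$ contains $x$ with probability $\gamma' := b/(n+1)$, and in both the ``$x$ sampled'' and ``$x$ not sampled'' cases the remaining indices can be written as a uniform $(b-1)$-subset $K$ of $E$ plus, in the second case, one further uniform index. Thus $\Msubset(E)$ and the ``$x$ sampled'' component are mixtures over the same $K$, so joint convexity over $K$, then convexity in the residual mixture (the map $m \mapsto \int\max\{c\,m(y) - d\,\phi_x(y),0\}\,dy$ is convex), then translation give $H_\alpha(\Msubset(E)\|\Msubset(E')) \le \sup_{|s|,|x|\le1} H_\alpha(\phi_0 \,\|\, (1-\gamma')\phi_0 + \gamma'\phi_{x-s})$. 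The key feature is that a sampled shared point is \emph{displaced} by $x$, so the gap $|x-s|$ ranges up to $2$; this is exactly why the worst dataset for WOR is all $-1$'s rather than all $0$'s.

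\emph{Step 2: the monotonicity lemma (the crux).} I claim $\tau \mapsto H_\alpha(\phi_0 \,\|\, (1-\beta)\phi_0 + \beta\phi_\tau)$ is symmetric in $\tau$ and non-decreasing in $|\tau|$ for every $\beta\in(0,1)$; symmetry is the reflection $y\mapsto-y$. Write the divergence as $\int\max\{c\phi_0(y) - d\phi_\tau(y),0\}\,dy$ with $c = 1-\alpha(1-\beta)$ and $d = \alpha\beta$. If $c\le0$ it vanishes; otherwise, since the location family has monotone likelihood ratio, $\{y: c\phi_0(y) > d\phi_\tau(y)\}$ is a half-line $(-\infty, y^\ast(\tau))$, so the divergence equals $c\,F_0(y^\ast) - d\,F_\tau(y^\ast)$ with $F_\mu$ the c.d.f.\ of $\phi_\mu$; differentiating in $\tau$, the $y^{\ast\prime}(\tau)$ terms cancel because $c\phi_0(y^\ast) = d\phi_\tau(y^\ast)$, leaving derivative $d\phi_\tau(y^\ast) > 0$. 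The identical computation handles the reversed divergence $H_\alpha((1-\beta)\phi_0 + \beta\phi_\tau \,\|\, \phi_0)$. Feeding $\tau=1,\beta=\gamma$ into the Poisson bound and $\tau=2,\beta=\gamma'$ into the WOR bound shows the suprema there equal exactly $H_\alpha(\Mpoisson((0,\dots,0)) \,\|\, \Mpoisson((0,\dots,0,1)))$ and $H_\alpha(\Msubset((-1,\dots,-1)) \,\|\, \Msubset((-1,\dots,-1,1)))$ respectively --- in the latter the sampled sum jumps from $-b$ to $-b+2$ precisely when the added $+1$ displaces a $-1$, with probability $\gamma' = b/(n+1)$.

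\emph{Step 3: remove relations and conclusion.} For $E\simremove E'$ the same couplings give $H_\alpha(\Mpoisson(E)\|\Mpoisson(E')) \le \sup_{|x|\le1} H_\alpha((1-\gamma)\phi_0 + \gamma\phi_x \,\|\, \phi_0)$ and the analogous WOR bound with $\phi_{x-s}$; the reversed monotonicity lemma pins these at $|x|=1$, resp.\ $|x-s|=2$, i.e.\ at $(\Mpoisson(\D'),\Mpoisson(\D))$, resp.\ $(\Msubset(\D'),\Msubset(\D))$. In each case $\D$ and $\D'$ are genuine neighbors under the stated relation, so $H_\alpha$ of the displayed pair also lower-bounds $\sup_{D\sim D'}H_\alpha(\M(D)\|\M(D'))$; hence these are dominating pairs of \emph{datasets}, as claimed. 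I expect the main obstacle to be Step 2 --- carrying out the threshold computation cleanly for \emph{both} the Gaussian and the Laplace noise (the Laplace log-density is piecewise linear, so $y^\ast(\tau)$ and the boundary of the $c\le0$ regime need extra care) --- together with arranging the WOR coupling in Step 1 so that the displacement-by-$2$ is visible.
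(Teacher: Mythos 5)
Your proposal is correct, but it takes a genuinely different route from the paper. The paper's proof is a short reduction: it quotes the fact that maximal-shift datasets dominate the unsubsampled mechanism (citing the analytical Gaussian analysis, with the Laplace case asserted by analogy), feeds the resulting pairs $(\normal(0,\sigma^2),\normal(1,\sigma^2))$ and $(\normal(r,\sigma^2),\normal(r+2,\sigma^2))$ into Theorem~\ref{thm:dominating-pairs-add-remove-sampling} (Theorem 11 of \cite{zhu22-optimal-characteristic-functions}) to obtain dominating pairs of distributions for $\Mpoisson$ and $\Msubset$, and then simply checks that these mixtures are exactly realized by the stated datasets, e.g.\ $\Mpoisson(\D')=(1-\gamma)\normal(0,\sigma^2)+\gamma\normal(1,\sigma^2)$ and $\Msubset(\D')=(1-\gamma)\normal(-b,\sigma^2)+\gamma\normal(-b+2,\sigma^2)$. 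You instead re-derive the needed special case of that subsampling theorem from first principles: the coupling over the shared points plus joint convexity and translation invariance of $H_\alpha$ reduces everything to a one-parameter family $H_\alpha\bigl(\phi_0 \,\|\, (1-\beta)\phi_0+\beta\phi_\tau\bigr)$ (and its reverse), and your MLR/envelope lemma pins the supremum at $|\tau|=1$ for Poisson and $|\tau|=2$ for WOR, which is exactly the displacement mechanism the paper only describes informally. The lower-bound step (the datasets are genuine neighbors, so the bound is attained) coincides with the paper's final identification. What each approach buys: yours is self-contained, makes the add-vs-substitution structure of WOR explicit, and actually proves the Laplace base case rather than asserting it by analogy, at the cost of length and of the Laplace-specific care you flag (the piecewise-constant likelihood-ratio regions, the $c\le 0$ and ``threshold at $\pm\infty$'' regimes, and checking both divergence directions for all $\alpha\ge 0$, including $\alpha<1$, since domination is required there too); the paper's proof is much shorter but leans on two external results as black boxes. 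I see no genuine gap in your outline—the flagged Laplace details are real but routine, and your decomposition of the WOR sample into a uniform $(b-1)$-subset plus one extra uniform index is valid.
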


The proposition implies that the hockey-stick divergence of the mechanisms with said datasets as input describes the privacy curves of the composed mechanisms under the add and remove relations, respectively.
We contrast this good behavior of composed and subsampled mechanisms under add and remove separately 
with the Laplace mechanism, which, as we will see in Section~\ref{sec:addvsremove}, does not behave well when composed under the combined add/remove relation.

Crucially, Proposition~\ref{prop:wc_for_add_remove_separately} implies that under the add and remove relations, we must add noise with twice the magnitude when sampling without replacement compared to Poisson subsampling!
We need twice as much noise because the center of noise changes by $1$ under Poisson subsampling and $2$ under sampling without replacement when the $1$ is included in a batch.
The intuition behind this difference is that the subroutine behaves similarly to the add/remove neighbouring relation when using Poisson subsampling, whereas it resembles the substitution neighbourhood when sampling without replacement.
When $\D'_i$ is included in the batch another data point is 'pushed out' of the batch under sampling without replacement.
Due to this parallel one might hope that the difference in privacy parameters between Poisson subsampling and sampling without replacement only differ by a small constant similar to the difference between the add/remove and substitution neighbouring relations.
That is indeed the case for many parameters, but as we show in Section~\ref{sec:substitution} this assumption unfortunately does not always hold.

Our dominating pair of datasets can be found by reduction to one of the main results of \cite{zhu22-optimal-characteristic-functions}.

\begin{theorem}[Theorem 11 of \cite{zhu22-optimal-characteristic-functions}]
  \label{thm:dominating-pairs-add-remove-sampling}
  Let $\M$ be a randomized mechanism, let $\Mpoisson$ be the $\gamma$-subsampled version of the mechanism, and let $\Msubset$ be the $\binom{n}{b}$-subsampled version of the mechanism on datasets of size $n$ and $n - 1$ with $\gamma = b/n$.
  \begin{enumerate}
    \item If $(P, Q)$ dominates $\M$ for add neighbours then $(P, (1-\gamma)P + \gamma Q)$ dominates $\Mpoisson$ for add neighbours and $((1-\gamma)P + \gamma Q, P)$ dominates $\Mpoisson$ for removal neighbours.
    \item If $(P, Q)$ dominates $\M$ for substitution neighbours then $(P, (1-\gamma)P + \gamma Q)$ dominates $\Msubset$ for add neighbours and $((1-\gamma)P + \gamma Q, P)$ dominates $\Msubset$ for removal neighbours.
  \end{enumerate}
\end{theorem}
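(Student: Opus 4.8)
The plan is to prove all four domination statements from a single template. For each subsampled mechanism and each neighboring pair, I would expose a latent variable --- the subsample of the points common to both datasets --- so that, conditioned on the latent, the two output distributions are either identical or take the form ``one base-mechanism distribution versus a two-component mixture of base-mechanism distributions.'' I would bound each conditional divergence by the claimed base dominating pair, then average back. Three properties of the hockey-stick divergence drive the argument: (i) joint convexity, $H_\alpha(\E_W P_W \| \E_W Q_W) \le \E_W H_\alpha(P_W \| Q_W)$, which holds because $H_\alpha(P\|Q) = \int (dP - \alpha\, dQ)_+$ is a convex function of the pair; (ii) the data-processing inequality $H_\alpha(\Proc(P)\|\Proc(Q)) \le H_\alpha(P\|Q)$ for any channel $\Proc$, valid since $H_\alpha$ is an $f$-divergence; and (iii) the characterization that $(P,Q)$ dominates $(P',Q')$ for \emph{all} $\alpha \ge 0$ if and only if there is a channel $\Proc$ with $\Proc(P)=P'$ and $\Proc(Q)=Q'$ (see \cite{zhu22-optimal-characteristic-functions}). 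Property (iii) is the crucial ingredient, and it is exactly why Definition~\ref{def:dominating-pair} ranges over all $\alpha \ge 0$ rather than only $\alpha \ge 1$.

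For the Poisson case (Part 1), the coupling is immediate. Given add-neighbors $D \simadd D'$ with $D' = D \cup \{x\}$, let $A$ be the random subsample of the common points (each kept independently with probability $\gamma$). Conditioned on $A$, I have $\Mpoisson(D) = \M(A)$, and resolving the independent coin for $x$ gives $\Mpoisson(D') = (1-\gamma)\M(A) + \gamma \M(A \cup \{x\})$. Since $(A, A\cup\{x\})$ are add-neighbors of $\M$, property (iii) supplies a channel $\Proc_A$ with $\Proc_A(P) = \M(A)$ and $\Proc_A(Q) = \M(A\cup\{x\})$; by linearity $\Proc_A((1-\gamma)P + \gamma Q) = (1-\gamma)\M(A) + \gamma\M(A\cup\{x\})$, so (ii) yields $H_\alpha(\M(A) \| (1-\gamma)\M(A) + \gamma\M(A\cup\{x\})) \le H_\alpha(P \| (1-\gamma)P + \gamma Q)$ uniformly in $A$ and $x$. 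Averaging over $A$ via (i) proves the add claim. The removal case is symmetric: the distinguished point now sits in the larger dataset, so the two-component mixture appears in the first argument, and one reads off the stated removal pair.

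The WOR case (Part 2) is the substantive one. With $\gamma = b/n$ and add-neighbors $D \simadd D'$, where $|D| = n-1$ and $D' = D \cup \{x^*\}$, I would take the latent $W = (S_{b-1}, j)$, with $S_{b-1}$ a uniform $(b-1)$-subset of the $n-1$ common points and $j$ a uniform common point outside $S_{b-1}$. A short counting check --- the identity $\binom{n-1}{b}\,b = \binom{n-1}{b-1}(n-b)$ --- shows that $S_{b-1} \cup \{j\}$ is a uniform $b$-subset of the common points, so conditioned on $W$ one has $\Msubset(D) = \M(S_{b-1}\cup\{j\})$ and, splitting on whether $x^*$ is sampled (probability $\gamma = b/n$), $\Msubset(D') = (1-\gamma)\M(S_{b-1}\cup\{j\}) + \gamma\M(S_{b-1}\cup\{x^*\})$. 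The two size-$b$ sets $S_{b-1}\cup\{j\}$ and $S_{b-1}\cup\{x^*\}$ differ by the single swap $j \leftrightarrow x^*$, so they are \emph{substitution}-neighbors of $\M$; this is precisely why domination of $\M$ under substitution, rather than add, is the correct hypothesis here. Property (iii) gives a channel $\Proc_W$ sending $(P,Q)$ to $(\M(S_{b-1}\cup\{j\}), \M(S_{b-1}\cup\{x^*\}))$, and steps (ii) and (i) finish exactly as before, producing the add pair $(P, (1-\gamma)P + \gamma Q)$; reversing orientation for removal gives $((1-\gamma)P + \gamma Q, P)$.

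The hard part will be twofold. First, invoking property (iii) soundly: the direction ``domination over all $\alpha \ge 0$ yields an actual channel'' is the nontrivial Blackwell-type statement, and it is what converts an abstract divergence inequality into a concrete post-processing map that I can apply verbatim to the mixtures; it genuinely needs the full range $\alpha \ge 0$. Second, and specific to WOR, the construction and verification of the coupling $W$: I must confirm that the single latent reproduces the true marginals of $\Msubset(D)$ and $\Msubset(D')$ simultaneously, that the conditioned pair is always a substitution-neighbor pair, and that the mixture weight is exactly $\gamma = b/n$. By contrast, the Poisson analysis is routine, since independent inclusion makes the latent coupling transparent. In all cases the resulting bound is uniform over the identity and value of the differing point and over the common dataset, so taking the supremum over neighbors yields the required domination.
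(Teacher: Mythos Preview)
The paper does not supply its own proof of this statement; it is quoted as Theorem~11 of \cite{zhu22-optimal-characteristic-functions} and used as a black box (see the sentence immediately following the theorem and its invocation in Appendix~\ref{app:add-remove-is-tight}). There is therefore no in-paper argument to compare against.

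That said, your plan is sound and is essentially the argument in the original reference. The three ingredients you isolate---joint convexity of $H_\alpha$, the data-processing inequality, and the Blackwell-type equivalence between ``domination for all $\alpha\ge 0$'' and ``existence of a simulating channel''---are exactly what is needed, and you deploy them correctly. In particular, the WOR coupling $W=(S_{b-1},j)$ is the right construction: the marginal checks you sketch go through (the identity $\binom{n-1}{b}\,b=\binom{n-1}{b-1}(n-b)$ does the bookkeeping), and the key observation that the conditioned pair $S_{b-1}\cup\{j\}$ versus $S_{b-1}\cup\{x^*\}$ is a \emph{substitution} pair of size-$b$ datasets is precisely why the hypothesis in Part~2 must be substitution-domination rather than add-domination.

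One small wrinkle: for the Poisson removal direction, your template actually produces $((1-\gamma)P+\gamma Q,\,P)$ as the dominating pair, not the $((1-\gamma)Q+\gamma P,\,P)$ printed in the statement. Your argument is the correct one---the differing point is included with probability $\gamma$, contributing the $Q$-like component with weight $\gamma$---and this is also consistent with the WOR removal pair in Part~2. The discrepancy appears to be a transcription slip in the paper's restatement rather than a gap in your reasoning, but be aware that ``reading off the stated removal pair'' does not quite match what your own derivation gives.
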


\begin{proof}[Proof of Proposition~\ref{prop:wc_for_add_remove_separately}]
    Without loss of generality, we show both parts for the Gaussian mechanism under the add neighbouring relation only.

    We first note that any pair of neighbouring datasets with maximum $\ell_2$-distance is a dominating pair of datasets for the Gaussian mechanism~\cite{balleWangAnalyticalGaussian}. 
    Since the data points in our setting are from $[-1,1]$ this implies that 
    $
    (\normal(0, \sigma^2), \normal(1, \sigma^2))
    $
    is a dominating pair of distributions for $\M$ under $\simadd$ and 
    $
    (\normal(r, \sigma^2), \normal(r + 2, \sigma^2))
    $
    is a dominating pair of distributions for $\M$ under $\simsubstitute$ for any $r \in \R$. 
    The distance of $2$ is obtained by substituting $-1$ with $1$.

    Now, let us prove part 1 of the proposition. To that end, let $\D$ be the all zeros dataset and let $\D'$ be $\D$ with a $1$ appended to the end. 
    The sum of the subsampled dataset is $1$ if the last data point is included in the sample and $0$ otherwise.
    As such, we have that
    \[
        \Mpoisson(\D') = (1 - \gamma) \normal(0, \sigma^2) + \gamma \normal(1, \sigma^2)
    \]
    Since $(\normal(0, \sigma^2), \normal(1, \sigma^2))$ is a dominating pair of distributions for $\M$ under $\simadd$ from Theorem~\ref{thm:dominating-pairs-add-remove-sampling} we have that 
    \begin{align*}
       (\normal(0, &\sigma^2), (1 - \gamma) \normal(0, \sigma^2) + \gamma \normal(1, \sigma^2)) \\
           &= (\Mpoisson(\D), \Mpoisson(\D'))
    \end{align*}
    dominates $\Mpoisson$ under $\simadd$.

    As for part 2, let $\gamma := b/n$ for convenience, let $\D$ consist of $n - 1$ copies of $-1$, let $\D'$ be $\D$ with a $1$ appended. We can describe $\Msubset(\D')$ by considering the two cases where $1$ is either excluded or included in the batch of size $b$ 
    \begin{align*}
       \Msubset&(\D') \\
       &\!\!\!\!\!\!\!\!\!\!= (1 - \gamma) \M(\underbrace{-1, \dots, -1, -1}_{b}) + \gamma \M(\underbrace{-1, \dots, -1, 1}_{b}) \\
       &\!\!\!\!\!\!\!\!\!\!= (1 - \gamma) \normal(-b, \sigma^2) + \gamma \normal(-b + 2, \sigma^2)
    \end{align*}
    Since $(\normal(-b, \sigma^2), \normal(-b + 2, \sigma^2))$ is a dominating pair of distributions for $\M$ under $\simsubstitute$ from Theorem~\ref{thm:dominating-pairs-add-remove-sampling} we have that 
    \begin{align*}
       (\normal(-b&, \sigma^2), (1 - \gamma) \normal(-b, \sigma^2) + \gamma \normal(-b + 2, \sigma^2)) \\
           &= (\Msubset(\D), \Msubset(\D'))
    \end{align*}
    dominates $\Msubset$ under $\simadd$.

    The proof for the remove direction is symmetric and the proof for the Laplace mechanism follows from replacing the normal distribution with the Laplace distribution.
\end{proof}

\section{No Worst-case Pair of Datasets under Add/Remove Relation}
\label{sec:addvsremove}

So far, we have considered the entire privacy curve for all $\varepsilon \in \R$. 
This is a necessary subtlety for PLD privacy accounting tools under composition (e.g., Theorem~\ref{thm:dominating-pair-dominates-composition}).
Here we focus only on the privacy curve for $\varepsilon \geq 0$. Our main result of this section is to give a minimal example of a mechanism $\M$ that admits a worst-case dataset pair under $\simaddremove$ yet $\M^k$ does not admit any worst-case dataset pair for some $k > 1$.
This violates an implicit assumption made by some privacy accountants. 

\begin{proposition}
    \label{prop:no_realizing_pair_for_add_remove}
    For some mechanism $\M$, the privacy curve of the $\binom{n}{b}$-subsampled mechanism $\Msubset$ is realized by a pair of datasets under $\simaddremove$, yet no pair of datasets realizes the privacy curve of $\Msubset^k$ for all $k > 1$.
    
\end{proposition}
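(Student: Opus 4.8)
The plan is to take $\M$ to be the Laplace mechanism and work with its $\binom{n}{b}$-subsampled version $\Msubset$, for a sampling rate $\gamma = b/n$ close to $1$ (e.g.\ $b = n-1$) and a small noise scale $s$; after an affine reparametrization we may assume the noise is $\Lap(0,1)$ and set $\Delta := 2/s$. By Proposition~\ref{prop:wc_for_add_remove_separately}(2), the datasets $\D_0 = (-1,\dots,-1)$ and $\D_0' = (-1,\dots,-1,1)$ form a dominating pair of datasets for $\Msubset$ under the add relation and $(\D_0',\D_0)$ under the remove relation; concretely $\Msubset(\D_0)$ becomes the distribution $P := \Lap(0,1)$ and $\Msubset(\D_0')$ the two-component mixture $Q := (1-\gamma)\Lap(0,1) + \gamma\Lap(\Delta,1)$. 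These are in particular dominating pairs of distributions, so Theorem~\ref{thm:dominating-pair-dominates-composition} gives that $(P^k,Q^k)$ dominates $\Msubset^k$ for add neighbours and $(Q^k,P^k)$ for remove neighbours; since $\Msubset^k(\D_0) = P^k$ and $\Msubset^k(\D_0') = Q^k$, these dominating pairs are still realized by $(\D_0,\D_0')$ and $(\D_0',\D_0)$, whence for every $k$
\[
  \delta^{\simaddremove}_{\Msubset^k}(\varepsilon) = \max\left\{\, H_{e^\varepsilon}(P^k \,\|\, Q^k),\ H_{e^\varepsilon}(Q^k \,\|\, P^k)\,\right\}, \qquad \varepsilon \ge 0 .
\]

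\textbf{The case $k=1$: a realizing pair exists.} Here I would show that the remove pair $(\D_0',\D_0)$ realizes the curve, i.e.\ $H_\alpha(Q\|P) \ge H_\alpha(P\|Q)$ for all $\alpha \ge 1$. Using the mixture structure, on the relevant range of $\alpha$ one has $H_\alpha(Q\|P) = \gamma\,H_{1+(\alpha-1)/\gamma}(\Lap(\Delta,1)\|\Lap(0,1))$ and $H_\alpha(P\|Q) = (1-\alpha(1-\gamma))\,H_{\alpha\gamma/(1-\alpha(1-\gamma))}(\Lap(0,1)\|\Lap(\Delta,1))$, both vanishing once $\alpha \ge \big((1-\gamma)+\gamma e^{-\Delta}\big)^{-1}$. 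Each divergence on the right is between two unit-scale Laplace distributions at distance $\Delta$, for which a direct integration yields the closed form $H_\beta(\Lap(a,1)\|\Lap(a',1)) = 1 - \sqrt{\beta}\,e^{-|a-a'|/2}$ for $\beta \in [1,e^{|a-a'|}]$ (and $0$ beyond). Substituting turns the claim into an explicit one-variable inequality in $\alpha$ (parametrized by $\gamma,\Delta$) that holds with equality at $\alpha=1$ and strictly for $\alpha>1$; checking it for the chosen $\gamma,\Delta$ finishes this part.

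\textbf{The case $k>1$: no realizing pair.} Note that $\operatorname{ess\,sup}(dP^k/dQ^k) = \big((1-\gamma)+\gamma e^{-\Delta}\big)^{-k}$ is strictly below $\operatorname{ess\,sup}(dQ^k/dP^k) = \big((1-\gamma)+\gamma e^{\Delta}\big)^{k}$, since their product is $1 + 2\gamma(1-\gamma)(\cosh\Delta - 1) > 1$; hence there is an interval of $\varepsilon>0$ on which $H_{e^\varepsilon}(P^k\|Q^k) = 0 < H_{e^\varepsilon}(Q^k\|P^k)$. Consequently any pair realizing $\delta^{\simaddremove}_{\Msubset^k}$ must be a remove-type pair (an add-type pair has divergence $\le H_{e^\varepsilon}(P^k\|Q^k)$, and the trivial pair $D=D'$ is excluded since the curve is somewhere positive), and therefore its divergence is $\le H_{e^\varepsilon}(Q^k\|P^k)$ for all $\varepsilon$; realizing the curve would then force $H_{e^\varepsilon}(Q^k\|P^k) \ge H_{e^\varepsilon}(P^k\|Q^k)$ for all $\varepsilon \ge 0$. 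So it suffices to produce one $\varepsilon \ge 0$ with the strict reverse inequality. The two curves agree at $\varepsilon = 0$ (both equal $\mathrm{TV}(P^k,Q^k)$), with right derivatives $-\Pr_{Q^k}[dP^k/dQ^k>1]$ and $-\Pr_{P^k}[dQ^k/dP^k>1]$; writing $dP^k/dQ^k$ as a product of the explicit per-coordinate likelihood ratios, these are the probabilities that a sum of $k$ i.i.d.\ per-coordinate log-ratios lands on the "wrong" side of $0$. In the regime $\gamma\to 1$, $s$ small (so $\Delta$ large), the per-coordinate remove-direction log-ratio is, under $P$, almost always moderately positive but occasionally very negative, and one shows that for every $k\ge 2$ the remove curve then descends faster at $\varepsilon=0^+$ than the add curve, i.e.\ $\Pr_{P^k}[dQ^k/dP^k>1] > \Pr_{Q^k}[dP^k/dQ^k>1]$. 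Together with the interval above on which the remove curve alone is positive, this shows the two curves cross, so neither dominates the other on $\varepsilon \ge 0$ and no pair of datasets realizes $\delta^{\simaddremove}_{\Msubset^k}$.

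\textbf{Main obstacle.} The delicate point is the last step — establishing the exceedance-probability comparison $\Pr_{P^k}[dQ^k/dP^k>1] > \Pr_{Q^k}[dP^k/dQ^k>1]$ \emph{simultaneously for all} $k \ge 2$. For small $k$ this is a finite, explicit computation with the piecewise-exponential ratios. For large $k$, however, both probabilities are large-deviation quantities governed by the \emph{same} exponential rate (the Chernoff information of $P$ and $Q$, which is direction-symmetric via the substitution $s\leftrightarrow 1-s$ in $\int p^s q^{1-s}$), so the comparison comes down to the sub-exponential prefactors. Choosing $\gamma$ and $\Delta$ so that both the $k=1$ domination and this prefactor comparison point the right way — and, if one prefers, locating the crossing at a moderate $\varepsilon$ rather than at $0^+$ for the larger values of $k$ — is where the work of the proof lies.
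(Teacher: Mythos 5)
Your scaffolding is sound and matches what the paper itself relies on: Proposition~\ref{prop:wc_for_add_remove_separately} together with Theorem~\ref{thm:dominating-pair-dominates-composition} does give $\delta^{\simaddremove}_{\Msubset^k}(\varepsilon)=\max\{H_{e^\varepsilon}(P^k\|Q^k),\,H_{e^\varepsilon}(Q^k\|P^k)\}$, and your observation that the add-direction curve vanishes on an interval where the remove-direction curve is still positive correctly rules out add-type realizing pairs, reducing non-realizability to exhibiting a single $\varepsilon\ge 0$ with $H_{e^\varepsilon}(P^k\|Q^k)>H_{e^\varepsilon}(Q^k\|P^k)$. But that decisive step is exactly what you never establish: the exceedance comparison $\Pr_{P^k}[dQ^k/dP^k>1]>\Pr_{Q^k}[dP^k/dQ^k>1]$ is only conjectured (and it must \emph{flip} relative to $k=1$, where remove dominates add and hence the opposite weak inequality holds at $\varepsilon=0^+$), the one-sided-derivative argument needs extra care because the per-coordinate log-ratio has atoms, and your ``main obstacle'' paragraph concedes that choosing $\gamma,\Delta$ and proving the prefactor comparison for all $k\ge 2$ is open. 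The $k=1$ domination is likewise asserted rather than checked (it does hold, e.g.\ via Proposition 30 of \cite{zhu22-optimal-characteristic-functions} or via your closed forms; note, though, that your claim that both divergences vanish once $\alpha\ge((1-\gamma)+\gamma e^{-\Delta})^{-1}$ is wrong for the remove direction, which vanishes only at $(1-\gamma)+\gamma e^{\Delta}$, as you yourself use later). As written, this is a proof plan whose hard quantitative kernel is missing.

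The paper sidesteps all of this by choosing a far simpler mechanism: a randomized-response-style map with two outputs (biased toward $0$ iff the dataset is all zeros), sampling rate $\gamma=1/2$, and $k=2$. The two composed output distributions are then explicit four-point distributions, and the crossing is exhibited by four one-line computations, $H_{4/3}(P\|Q)=11/48>1/6=H_{4/3}(Q\|P)$ while $H_{2}(P\|Q)=1/16<1/8=H_{2}(Q\|P)$. Note also that only \emph{some} $k>1$ is needed (the paper proves $k=2$), so your attempt to cover all $k\ge 2$ via large-deviation prefactors is both harder than necessary and the source of your obstruction. If you wish to keep the Laplace mechanism, the same finite strategy closes the gap: fix $k=2$ and evaluate the piecewise-exponential hockey-stick integrals exactly at two concrete values of $\varepsilon$ (the paper suggests $\varepsilon=0.25$ and $\varepsilon=1.5$) instead of arguing asymptotically at $\varepsilon=0^+$.
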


We give a proof of the proposition for a simple mechanism at the end of this section. However, it is more illustrative to demonstrate the proposition informally for the Laplace mechanism $\M$.
In this case, note that the proposition can be extended to $\Mpoisson$ as well. 
The proposition stands in contrast to the case of the add and remove relations discussed in Proposition~\ref{prop:wc_for_add_remove_separately}.
That is, we can find datasets $\D \simadd \D'$ such that $\delta_{\Msubset}^{\simadd}$ is realized by $(\D, \D')$ and $\delta_{\Msubset}^{\simremove}$ is realized by $(\D', \D)$, but no such (ordered) pair realizes the privacy curve under $\simaddremove$. 

Moreover, it is generally the case that the privacy curve of a subsampled mechanism without composition under $\simremove$ dominates the privacy curve under $\simadd$ when $\varepsilon \geq 0$ (see, e.g., Proposition 30 of \cite{zhu22-optimal-characteristic-functions} or Theorem 5 of \cite{mironov2019renyi-addremove-poisson}). 
Specifically, it follows from Proposition 30 of \cite{zhu22-optimal-characteristic-functions} that in the case of the subsampled Laplace mechanism and 
$\varepsilon \geq 0$, we have that
\[
    \delta_{\Msubset}^{\simaddremove}(\varepsilon)
        = \delta_{\Msubset}^{\simremove}(\varepsilon)
        \geq \delta_{\Msubset}^{\simadd}(\varepsilon).
\]


Here we visualize the counter-example by plotting privacy curves for the add and remove relation in Fig.~\ref{fig:laplace_pc}.
Note that $\delta_{\Msubset}^{\simaddremove}(\varepsilon) = \max\{\delta_{\Msubset}^{\simadd}(\varepsilon), \delta_{\Msubset}^{\simremove}(\varepsilon)\}$.
Fig.~\ref{fig:laplace_pc} shows several variations of the curves $\delta_{\Msubset^k}^{\simadd}$ and $\delta_{\Msubset^k}^{\simremove}$, which 
we estimated numerically by Monte Carlo simulation (as in, e.g., \cite{wang2023randomized}). We give more methodological details shortly.

These curves are seen to cross in the region $\varepsilon \geq 0$ for $k = 2$ compositions. The phenomenon is most apparent for $k=2$. 
There is a clear break in the curve for the remove relation.
Under many compositions, however, it is known that both PLDs converge to a Gaussian distribution \cite{DongRS19}, which explains why this break vanishes as the number of compositions increases.

\begin{figure*}[t]
  \centering
  \includegraphics[width=0.75\linewidth]{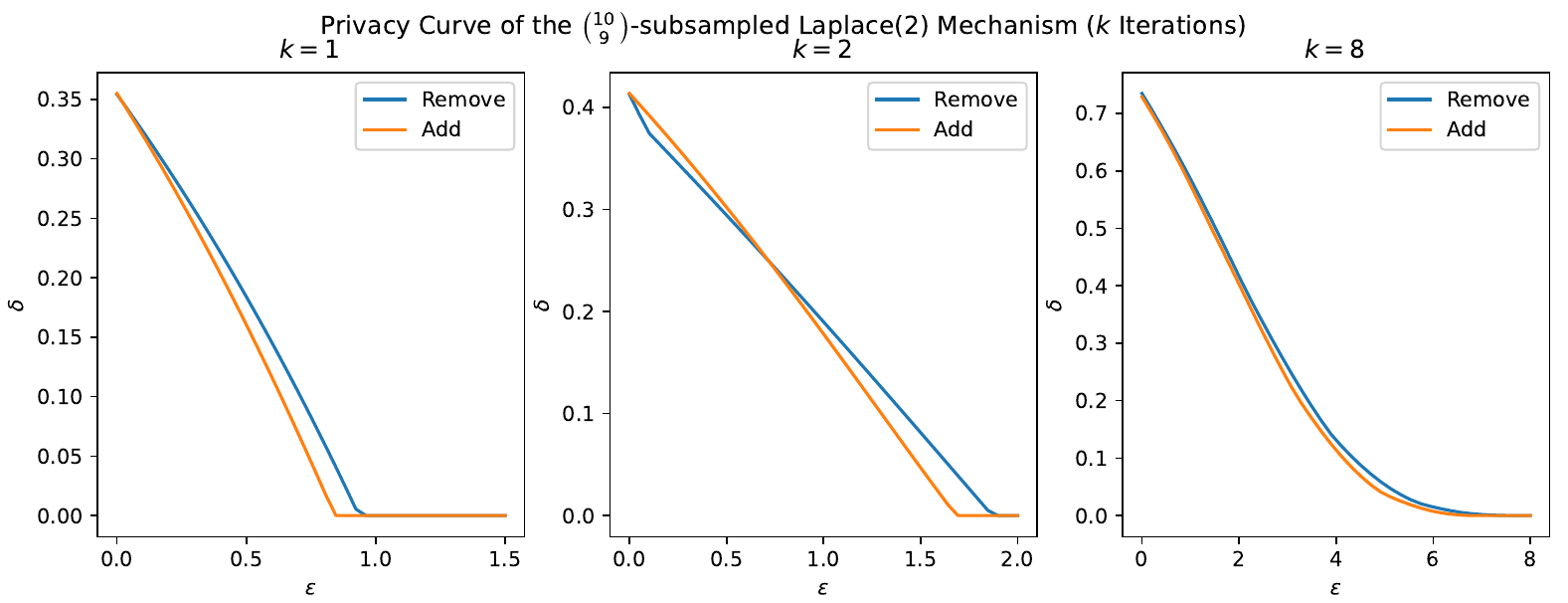}
  \caption{
    The privacy curves for the subsampled Laplace mechanism under the remove and add neighbouring relations respectively are shown. The dominance of the privacy curve under the remove over the add neighbouring relation for $\varepsilon \geq 0$ is not preserved by composition.
  } 
  \label{fig:laplace_pc}
\end{figure*}

\textbf{Monte Carlo Simulation\quad} To produce Fig.~\ref{fig:laplace_pc}, we leverage the PLD framework and apply Monte Carlo simulation.

By Proposition~\ref{prop:wc_for_add_remove_separately} and Theorem~\ref{thm:dominating-pair-dominates-composition}, the privacy curve of the composed and subsampled Laplace mechanism under add (remove) is given by $H_{e^\epsilon}(\Mpoisson(\D)^k || \Mpoisson(\D')^k)$ (vice-versa for remove) where
\[
    \D := (0, \dots, 0) \quad \D' := (0, \dots, 0, 1).
\]

On the other hand, a standard result (e.g. Theorem 3.5 of \cite{GopiLW21}) asserts that the PLD of a composed mechanism is obtained by self-convolving the PLD of the uncomposed mechanism, namely
\begin{align*}
    H_{e^\epsilon}&(\Mpoisson(\D)^k || \Mpoisson(\D')^k) \\
        & = \mathbb{E}_{Y \sim L_{\Mpoisson^k}(\D || \D')}[\max\{1 - e^{\varepsilon - Y}, 0\}] \\
        & = \mathbb{E}_{Y \sim L_{\Mpoisson}(\D || \D')^{\oplus k}}[\max\{1 - e^{\varepsilon - Y}, 0\}].
\end{align*}
We estimate this expectation via sampling. We know the densities of $\Mpoisson(\D) = \normal(0, \sigma^2)$ and $\Mpoisson(\D') = (1 - \gamma)\normal(0, \sigma^2) + \gamma\normal(1, \sigma^2)$, so we can quickly sample $L_{\Mpoisson}(\D || \D')$. By drawing $k$ samples and summing them, we can sample $L_{\Mpoisson}(\D || \D')^{\oplus k}$ as well. Therefore, we can draw $Y_i \sim L_{\Mpoisson}(\D || \D')^k$ for $1 \leq i \leq N$, then compute the Monte Carlo estimate $\frac{1}{N}\sum_{i = 1}^N \max\{1 - e^{\varepsilon - Y_i}, 0\}$.

As for the error, the quantity inside the expectation is bounded in $[0, 1]$, so we can apply H{\"o}ffding as well as the union bound. In this case,
\[
    N = \left\lceil \frac{\ln(2|E|/\beta)}{2\alpha^2} \right\rceil
\]
samples will suffice to ensure that the Monte Carlo estimate of $H_{e^\epsilon}(\Mpoisson(\D) || \Mpoisson(\D'))$ is accurate within $\alpha$, with probability $1 - \beta$, for all $\epsilon \in E$ simultaneously.

For Fig.~\ref{fig:laplace_pc}, we chose $\alpha = 0.001$ and $\beta = 0.01$ and considered $|E| = 40$ values of $\epsilon$, which required $N = 3,342,306$ samples. This value of $\alpha$ is small enough relative to the plot that our conclusion holds with probability at least $99\%$.

\textbf{Proof for Randomized Response\quad}
We also show by exact calculation that Proposition~\ref{prop:no_realizing_pair_for_add_remove} holds using a simple mechanism. 
The mechanism is similar to randomized response~\cite{Warner65} which is used in differential privacy to privately release bits.
The mechanism takes a dataset as input and randomly outputs a single bit. 
The output is weighted towards $0$ if all entries of the dataset are $0$ and towards $1$ otherwise.
Here we use this mechanism for the proof because the calculations and presentation are particularly clean and simple since there are only two outputs. 
A similar proof can be used to verify the accuracy of the estimated plots for the Laplace mechanism presented in Section~\ref{sec:addvsremove} by calculating the exact hockey-stick divergence at, e.g., $\varepsilon = 0.25$ and $\varepsilon=1.5$.

\begin{proof}[Proof of Proposition~\ref{prop:no_realizing_pair_for_add_remove} for Randomized Response]

Consider the mechanism
\[
\M(\D) = 
\begin{cases}
b & \text{with probability } \frac{3}{4} \\
1 - b & \text{with probability } \frac{1}{4} 
\end{cases}
\]
where $b \in \{0, 1\}$ is $0$ if all entries in $\D$ are $0$ and $1$ otherwise.

Let $\D$ consist of all zeroes and let $\D'$ be obtained from $\D$ by adding a single $1$. 
Clearly $(\M(\D), \M(\D'))$ is a dominating pair of distributions under the add neighbouring relation because the output distributions are identical for all other possible pairs of datasets.
We will present the proof using $\Mpoisson$, but it is the same for $\Msubset$ since the only effect is whether or not the $1$ is sampled in a batch.
We use a sampling probability of $\gamma=1/2$. 
As the output distribution of $\M$ is symmetric, the probability for $\Mpoisson(\D')$ to output either bit is $1/2 \cdot 3/4 + 1/2 \cdot 1/4 = 1/2$.
The counterexample occurs when running the mechanism for 2 iterations. 
There are 4 possible outcomes of the two iterations. 
The probability of any of these outcomes for $\Mpoisson(\D')$ is $1/2 \cdot 1/2 = 1/4$.
For $\Mpoisson(\D)$ we can find the output distribution by considering each distinct outcome
\begin{align*}
\Pr[&\Mpoisson(\D) \times \Mpoisson(\D) = (0, 0)] \\
    &= \Pr[\Mpoisson(\D) = 0] \cdot \Pr[\Mpoisson(\D) = 0] \\
    &= 3/4 \cdot 3/4 
    = 9/16 \\
\Pr[&\Mpoisson(\D) \times \Mpoisson(\D) = (0, 1)] \\
    &= \Pr[\Mpoisson(\D) = 0] \cdot \Pr[\Mpoisson(\D) = 1] \\
    &= 3/4 \cdot 1/4 
    = 3/16 \\ 
\Pr[&\Mpoisson(\D) \times \Mpoisson(\D) = (1, 0)] \\
    &= \Pr[\Mpoisson(\D) = 1] \cdot \Pr[\Mpoisson(\D) = 0] \\
    &= 1/4 \cdot 3/4 
    = 3/16 \\ 
\Pr[&\Mpoisson(\D) \times \Mpoisson(\D) = (1, 1)] \\
    &= \Pr[\Mpoisson(\D) = 1] \cdot \Pr[\Mpoisson(\D) = 1] \\
    &= 1/4 \cdot 1/4 
    = 1/16
\end{align*}

Now, we find the hockey-stick divergence in both directions for $\alpha = 4/3$ and $\alpha = 2$.
We denote the two distributions for running the mechanism as $P = \Mpoisson(\D) \times \Mpoisson(\D)$ and $Q = \Mpoisson(\D') \times \Mpoisson(\D')$.
\begin{align*}
    H_{4/3}(P || Q)
        &= \Pr[P = (0, 0)] - 4/3 \cdot \Pr[Q = (0, 0)] \\
        &= 9/16 - 4/3 \cdot 1/4 
        = 11/48 \\
    H_{4/3}(Q || P)
        &= \Pr[Q \in \{(0, 1), (1, 0), (1,1)\}] \\
        & \quad\quad\quad- 4/3 \cdot \Pr[P \in \{(0, 1), (1, 0), (1,1)\}] \\
        &= 3/4 - 4/3 \cdot 7/16 
        = 1 / 6 \\
    H_{2}(P || Q)
        &= \Pr[P = (0, 0)] - 2 \cdot \Pr[Q = (0, 0)] \\
        &= 9/16 - 2 \cdot 1/4 
        = 1/16 \\
    H_{2}(Q || P)
        &= \Pr[Q = (1,1)] - 2 \cdot \Pr[P = (1,1)] \\
        &= 1/4 - 2 \cdot 1/16 
        = 1/8
\end{align*}

As such, we have that $H_{4/3}(P || Q) > H_{4/3}(Q || P)$ and $H_{2}(P || Q) < H_{2}(Q || P)$.

Since the pairs of $(\Mpoisson(\D), \Mpoisson(\D'))$ and $(\Mpoisson(\D'),\Mpoisson(\D))$ are dominating for add and remove respectively, we know that the privacy curve for add/remove is defined by the maximum for any number of iterations. 
We have for all $\alpha \geq 1$ that $H_\alpha(\Mpoisson(\D') || \Mpoisson(\D)) \geq H_\alpha(\Mpoisson(\D) || \Mpoisson(\D'))$ (see Proposition 30 of \cite{zhu22-optimal-characteristic-functions}).
As such, the privacy curve is realized by $(\D',\D)$ for $k = 1$, but the pair does not realize the privacy curve for $k = 2$ as shown above.
\end{proof}

\textbf{Avoiding incorrect upper bounds\quad}
As shown in this section we cannot assume that the privacy curve for the remove relation dominates the add relation for composed subsampled mechanisms under $\simaddremove$ even though it is the case without composition. 
Luckily, this particular issue can be easily resolved by computing the privacy parameters for the add and remove relation separately and taking the maximum.
This technique is already used in practice in, e.g., the Google DP library \cite{GoogleDP}.

We conjecture that this workaround is unnecessary for the Gaussian mechanism---the natural choice for DP-SGD. We searched a wide range of parameters and were unable to produce a counterexample.
\begin{conjecture}
    \label{conj:gaussian-remove-vs-add}
    Let $\M$ be the Gaussian mechanism with any $\sigma$.
    Then for all $k > 0$, $\gamma \in [0, 1]$, and $\varepsilon \geq 0$ we have
    \[
        \delta_{\Mpoisson^k}^{\simaddremove}(\varepsilon)
        = \delta_{\Mpoisson^k}^{\simremove}(\varepsilon)
        \geq \delta_{\Mpoisson^k}^{\simadd}(\varepsilon).
    \]
\end{conjecture}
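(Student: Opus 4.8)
The plan is to first turn the conjecture into a concrete inequality between two fixed distributions. Write $P_0 := \normal(0,\sigma^2)$ and $M := (1-\gamma)\normal(0,\sigma^2) + \gamma\normal(1,\sigma^2)$, and take $\D = (0,\dots,0)$, $\D' = (0,\dots,0,1)$, so that $\Mpoisson(\D) = P_0$ and $\Mpoisson(\D') = M$. By Proposition~\ref{prop:wc_for_add_remove_separately}, $(\D,\D')$ is a dominating pair of datasets for $\Mpoisson$ under $\simadd$ and $(\D',\D)$ is one under $\simremove$; in particular the \emph{same} two output distributions govern both relations, only with their roles swapped. Applying Theorem~\ref{thm:dominating-pair-dominates-composition} to the $k$-fold products, and noting that $(\Mpoisson^k(\D),\Mpoisson^k(\D'))$ and $(\Mpoisson^k(\D'),\Mpoisson^k(\D))$ are output distributions on actual neighbouring datasets (so these dominating pairs are also lower bounds), we obtain $\delta^{\simremove}_{\Mpoisson^k}(\varepsilon)=H_{e^\varepsilon}(M^k\|P_0^k)$ and $\delta^{\simadd}_{\Mpoisson^k}(\varepsilon)=H_{e^\varepsilon}(P_0^k\|M^k)$, where $M^k,P_0^k$ denote $k$-fold product distributions. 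Since $\delta^{\simaddremove}=\max\{\delta^{\simadd},\delta^{\simremove}\}$, the conjecture is equivalent to
\[
    H_{e^\varepsilon}(M^k \,\|\, P_0^k) \;\ge\; H_{e^\varepsilon}(P_0^k \,\|\, M^k) \qquad \text{for all } \varepsilon \ge 0 \text{ and all } k \ge 1 .
\]

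I would next rewrite this so that only one privacy profile appears. Since $P_0^k$ and $M^k$ are mutually absolutely continuous, the identity $H_\alpha(P\|Q) = (1-\alpha) + \alpha\,H_{1/\alpha}(Q\|P)$ applied to the right-hand side gives, with $\Phi_k(\alpha) := H_\alpha(M^k\|P_0^k)$, the self-referential form
\[
    \Phi_k(e^\varepsilon) - e^\varepsilon\,\Phi_k(e^{-\varepsilon}) \;\ge\; 1 - e^\varepsilon \qquad (\varepsilon \ge 0) .
\]
Equivalently, using $H_{e^\varepsilon}(P\|Q) = \Pr[\,Y - E > \varepsilon\,]$ with $Y$ the privacy loss of $(P,Q)$ and $E \sim \mathrm{Exp}(1)$ independent, one must show $\Pr[\,S^R_k - E > \varepsilon\,] \ge \Pr[\,S^A_k - E > \varepsilon\,]$ for $\varepsilon \ge 0$, where $S^R_k$ and $S^A_k$ are sums of $k$ i.i.d.\ copies of the single-step remove and add privacy losses. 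Writing $\mu := 1/\sigma^2$, these single-step losses have the explicit form $\ln\!\big((1-\gamma) + \gamma e^{r}\big)$ with $r \sim (1-\gamma)\normal(-\tfrac{\mu}{2},\mu) + \gamma\normal(\tfrac{\mu}{2},\mu)$, respectively $-\ln\!\big((1-\gamma) + \gamma e^{r'}\big)$ with $r' \sim \normal(-\tfrac{\mu}{2},\mu)$. The case $k=1$ is the known fact that $\simremove$ dominates $\simadd$ without composition (e.g.\ Proposition~30 of \cite{zhu22-optimal-characteristic-functions}), so the entire difficulty is showing self-composition preserves this.

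The naive route is tensorization: if the comparison of the two single-step privacy losses held against every test function $y\mapsto(1-e^{c-y})_+$ for \emph{all} $c\in\R$ rather than only $c\ge 0$, it would pass to sums of independent copies for free, since that family of test functions is closed under translation. The obstruction is that the two-sided comparison is genuinely false --- for $\varepsilon<0$ the remove profile does not dominate the add profile (the same effect that, for the Laplace mechanism, makes Proposition~\ref{prop:no_realizing_pair_for_add_remove} possible) --- and the translation-closure of the $c\ge 0$ family is already the full family, so no intermediate shift-invariant class exists. Any proof must therefore use quantitative features of the \emph{Gaussian} privacy loss (log-concavity and light tails of $\normal(\pm\tfrac{\mu}{2},\mu)$) that the Laplace privacy loss lacks. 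I would try one of two things. First, a limiting argument: $S^R_k$ and $S^A_k$ are asymptotically Gaussian, and for a non-subsampled Gaussian mechanism $\simadd$ and $\simremove$ coincide exactly, so the target inequality holds with equality in the limit; one would then control the finite-$k$ correction via an Edgeworth/Berry--Esseen expansion of the two privacy-loss distributions and check that its sign is favourable on $\varepsilon\ge 0$. Second, an induction on $k$ carried by an invariant intermediate in strength between the $\varepsilon\ge 0$ inequality and the false two-sided one --- for instance a comparison of the two privacy-loss distributions after truncation or exponential reweighting on the non-negative half-line, engineered so that convolving with one more single-step pair preserves it.

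The step I expect to be the crux is exactly choosing that invariant (or the correction-term estimate): it must be strong enough to survive convolution yet weak enough to be true, and since the ``free'' tensorization is structurally blocked, this needs a genuinely new quantitative statement about how the negative-$\varepsilon$ deficit of the single-step Gaussian remove profile is controlled by its positive-$\varepsilon$ surplus --- robustly under $k$-fold convolution and uniformly in $\sigma$ and $\gamma$. The extensive but unsuccessful numerical search for a counterexample is consistent with the inequality being true but delicate.
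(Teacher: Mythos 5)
The statement you are trying to prove is stated in the paper only as a \emph{conjecture}: the authors offer no proof, just an unsuccessful numerical search for counterexamples, so there is no ``paper proof'' to match. Your reduction is sound as far as it goes: by Proposition~\ref{prop:wc_for_add_remove_separately}, Theorem~\ref{thm:dominating-pair-dominates-composition}, and the fact that $(P_0^k, M^k)$ and $(M^k, P_0^k)$ are realized by the actual neighbouring datasets $\D=(0,\dots,0)$ and $\D'=(0,\dots,0,1)$, one indeed gets $\delta^{\simadd}_{\Mpoisson^k}(\varepsilon)=H_{e^\varepsilon}(P_0^k\|M^k)$ and $\delta^{\simremove}_{\Mpoisson^k}(\varepsilon)=H_{e^\varepsilon}(M^k\|P_0^k)$, and with $\delta^{\simaddremove}=\max\{\delta^{\simadd},\delta^{\simremove}\}$ the conjecture is equivalent to $H_{e^\varepsilon}(M^k\|P_0^k)\ge H_{e^\varepsilon}(P_0^k\|M^k)$ for all $\varepsilon\ge 0$ and $k\ge 1$. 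The conversion identity $H_\alpha(P\|Q)=(1-\alpha)+\alpha H_{1/\alpha}(Q\|P)$, the representation $H_{e^\varepsilon}(P\|Q)=\Pr[Y-E>\varepsilon]$, the explicit single-step privacy losses, and your observation that naive tensorization is blocked because the two-sided (all $\varepsilon\in\R$) comparison is false while the $c\ge 0$ family of test functions is not translation-closed short of the full family, are all correct and correctly identify why the problem is hard.

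But this is a reduction plus a research plan, not a proof. The crux inequality --- that the remove-versus-add ordering of the composed hockey-stick divergences survives $k$-fold convolution of the privacy-loss distributions on $\varepsilon\ge 0$ --- is exactly what you leave open: the Edgeworth/Berry--Esseen route gives only an asymptotic statement with an uncontrolled sign of the finite-$k$ correction uniformly in $\sigma$, $\gamma$, and $\varepsilon$, and the ``induction with a convolution-stable invariant'' route names the needed object without exhibiting it. Since neither strategy is carried out, the statement remains unproved; your write-up is best read as an accurate account of why it is a conjecture (consistent with the paper's own position), not as a resolution of it.
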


\section{Comparison of Sampling Schemes} 
\label{sec:poissonvswor}

In this section we explore the difference in privacy parameters between Poisson subsampling and sampling without replacement.
We focus on the subsampled Gaussian mechanism which is the mechanism of choice for DP-SGD.
We show that for some parameters the privacy guarantees of the mechanism differ significantly between the two sampling schemes. 

There are several different techniques one might use when selecting privacy-specific hyperparameters for DP-SGD. 
One approach is to fix the value of $\delta$ and the number of iterations.
Given a sampling rate $\gamma$ (we set $\gamma = b/n$ in the case of WOR) and a value for $\varepsilon$, we can compute the smallest value for the noise multiplier $\sigma$ such that the mechanism satisfies $(\varepsilon, \delta)$-differential privacy.
We use this approach to showcase our findings.
We fix $\delta = 10^{-6}$ and the number of iterations to $10,000$.
We then vary the sampling rate between $10^{-4}$ to $1$ and use the \textit{PLD} accountant implemented in the Opacus library~\cite{opacus} to compute $\sigma$. 
The implementation of the PLD accountant assumes that we use Poisson subsampling.
It follows from our result in Proposition \ref{prop:wc_for_add_remove_separately} that we can compute the noise multiplier required under WOR by simply doubling the value of $\sigma$ required under Poisson.

\begin{figure}[ht]
  \centering
  \includegraphics[width=0.5\linewidth]{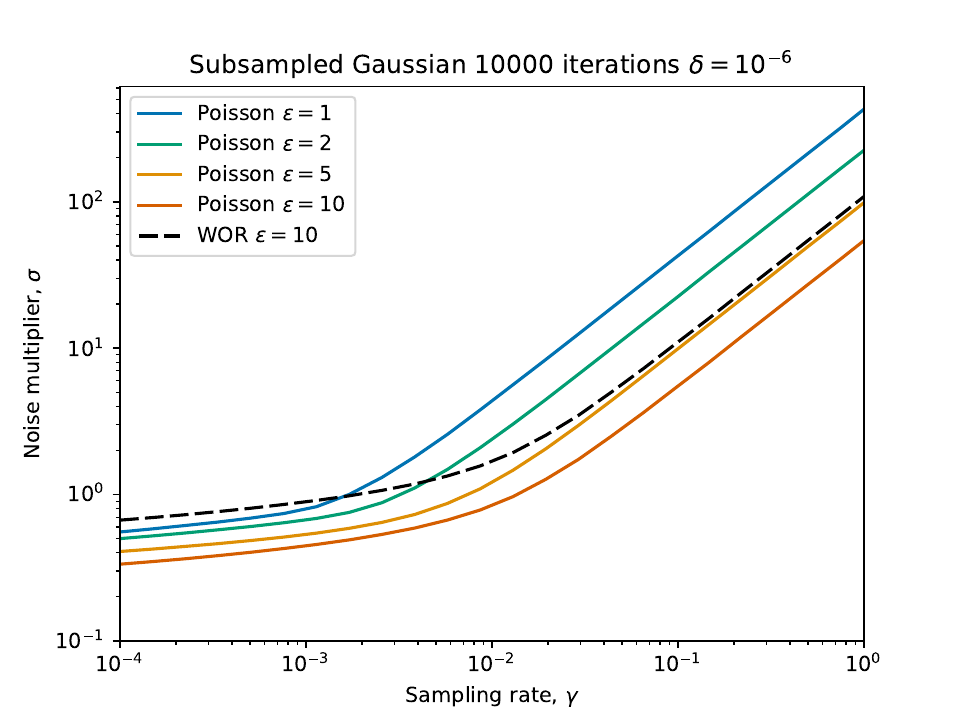}
  \caption{
    Plots of the smallest noise multiplier $\sigma$ required to achieve certain privacy parameters for the subsampled Gaussian mechanism with varying sampling rates under add/remove. 
    Each line shows a specific value of $\varepsilon$ for either Poisson subsampling or sampling without replacement.
    The parameter $\delta$ is fixed to $10^{-6}$ for all lines.
  } 
  \label{fig:poisson-vs-WOR-plot}
\end{figure}

In Fig.~\ref{fig:poisson-vs-WOR-plot} we plot the noise multiplier required to achieve $(\varepsilon, 10^{-6})$-DP with Poisson subsampling for $\varepsilon \in \{1, 2, 5, 10\}$.
For comparison, we plot the noise multiplier that achieves $(10, 10^{-6})$-DP when sampling without replacement. 
Recall from Section~\ref{sec:wc-datasets} that the noise magnitude required when sampling without replacement is exactly twice that required for Poisson subsampling.
The plots are clearly divided into two regions. 
For large sampling rate, the noise multiplier scales roughly linearly in the sampling rate. 
However, for sufficiently low sampling rates the noise multiplier decreases much slower. 
This effect has been observed previously for setting hyperparameters~(see Figure 1 of \cite{google23dpfy-ml} for a similar plot).


\textbf{Avoiding problematic parameters\quad}
It is generally advised to select parameters that fall into the right-hand regime of the plots in Fig.~\ref{fig:poisson-vs-WOR-plot}~\cite{google23dpfy-ml}.
However, one might select parameters close to the transition point. 
This can be especially problematic if the wrong privacy accountant is used.
The transition point happens when $\sigma$ is slightly less than $1$ for Poisson sampling and therefore it happens when it is slightly less than $2$ for sampling without replacement.
The consequence can be seen for the plot for sampling without replacement in Fig.~\ref{fig:poisson-vs-WOR-plot}. 
When the sampling rates are high the noise required roughly matches that for $\varepsilon = 5$ with Poisson subsampling.
But when the sampling rate is small we have to add more noise than is required for $\varepsilon = 1$ with Poisson subsampling.
As such, if we use a privacy accountant for Poisson subsampling and have a target of $\varepsilon = 1$ but our implementation uses sampling without replacement the actual value of $\varepsilon$ could be above $10$!
We might hope that this increase would be offset if we allow for some slack in $\delta$ as well.
However, as seen in the table of Table~\ref{table:poisson-vs-WOR-table} 
there can still be a big gap in $\varepsilon$ between the sampling schemes even when we allow a difference of several orders of magnitude in $\delta$.

\begin{table}[ht]
  \centering
  \caption{
    The table contrasts the privacy parameter $\varepsilon$ for the subsampled Gaussian mechanism with $10,000$ iterations, sampling rate $\gamma=0.001$, and noise multiplier $\sigma=0.8$ for multiple values of $\delta$.
  }
  \begin{tabular}{ c c c }
     $\delta$ & $\epsilon$ (Poisson) & $\epsilon$ (WOR) \\
     \hline
     $10^{-7}$ & 1.19 & 17.48 \\
     $10^{-6}$ & 0.96 & 15.26 \\
     $10^{-5}$ & 0.80 & 12.98 \\
     $10^{-4}$ & 0.64 & 10.62   
  \end{tabular}
  \label{table:poisson-vs-WOR-table}
\end{table}


\section{Substitution Neighbouring Relation}
\label{sec:substitution}

In this section, we consider both sampling schemes under the substitution neighbouring relation. 
In their work on computing tight differential privacy guarantees, 
\cite{koskela20-gaussian-variants} considered worst-case distributions for the subsampled Gaussian mechanism under multiple sampling techniques and neighbouring relations.
In the substitution case,
they compute the hockey-stick divergence between $(1-\gamma)\normal(0, \sigma^2) + \gamma\normal(-1, \sigma^2)$ and $(1-\gamma)\normal(0, \sigma^2) + \gamma\normal(1, \sigma^2)$.
These distributions correspond to running the mechanism with neighbouring datasets where all but one entry is $0$.
We first consider Poisson subsampling in the proposition below and later discuss sampling without replacement.

\begin{proposition}
    \label{prop:wc_for_poisson_substitution}
    Consider the Gaussian mechanism $\M(x_1, \dots, x_n) := \sum_{i = 1}^n x_i + \normal(0, \sigma^2)$ and let $\Mpoisson$ be the $\gamma$-subsampled mechanism. Then $\D := (0, \dots, 0, 1)$ and $\D' := (0, \dots, 0, -1)$ form a dominating pair of datasets under the substitution neighbouring relation.
\end{proposition}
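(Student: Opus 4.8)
The plan is to prove that $(\Mpoisson(\D),\Mpoisson(\D'))$ is a dominating pair of \emph{distributions} for $\Mpoisson$ under $\simsubstitute$; since $\D\simsubstitute\D'$, this is exactly what it means for $(\D,\D')$ to be a dominating pair of datasets. First I would record the explicit form of these output distributions. Because every coordinate of $\D$ and $\D'$ other than the last is $0$, subsampling has no effect on those coordinates, so $\Mpoisson(\D) = (1-\gamma)\normal(0,\sigma^2) + \gamma\normal(1,\sigma^2) =: P_1$ and $\Mpoisson(\D') = P_{-1}$, where $P_c := (1-\gamma)\normal(0,\sigma^2) + \gamma\normal(c,\sigma^2)$. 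As $\D\simsubstitute\D'$, this already gives $\sup_{E\simsubstitute E'} H_\alpha(\Mpoisson(E)\|\Mpoisson(E')) \ge H_\alpha(P_1\|P_{-1})$ for every $\alpha\ge 0$, so all the content is in the reverse inequality.

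Second, I would reduce an arbitrary neighbouring pair $E\simsubstitute E'$ to the two-parameter family $\{(P_a,P_b)\}$. Since the law of a Poisson-subsampled mechanism is invariant under permuting the dataset, we may assume $E$ and $E'$ differ only in the last coordinate, with values $a\in[-1,1]$ and $b\in[-1,1]$ there. Let $I'$ denote the subsample of the common coordinates and $S$ the sum of $E$ over $I'$. Conditioning on $I'$, the law of $\Mpoisson(E)$ is $(1-\gamma)\normal(S,\sigma^2)+\gamma\normal(S+a,\sigma^2)$ and that of $\Mpoisson(E')$ is $(1-\gamma)\normal(S,\sigma^2)+\gamma\normal(S+b,\sigma^2)$, and the two unconditional laws are mixtures over $I'$ with the \emph{same} mixing weights. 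Joint convexity of $(P,Q)\mapsto H_\alpha(P\|Q)$ then bounds $H_\alpha(\Mpoisson(E)\|\Mpoisson(E'))$ by $\E_{I'}$ of the conditional divergences, and translation invariance of the hockey-stick divergence (shift both arguments by $-S$) collapses each conditional divergence to $H_\alpha(P_a\|P_b)$, which no longer depends on $S$. Hence $H_\alpha(\Mpoisson(E)\|\Mpoisson(E')) \le H_\alpha(P_a\|P_b)$, and it remains to prove $H_\alpha(P_a\|P_b)\le H_\alpha(P_1\|P_{-1})$ for all $a,b\in[-1,1]$ and all $\alpha\ge 0$.

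Third, I would establish that extremal bound, which is the heart of the argument. The reflection $y\mapsto -y$ sends $P_c\mapsto P_{-c}$ and preserves $H_\alpha$, so we may normalize (say) $a+b\ge 0$, and this symmetry fixes the pair $(1,-1)$. For the symmetric sub-case $b=-a$, the Gaussian contraction channel $y\mapsto a y + Z$ with $Z\sim\normal(0,(1-a^2)\sigma^2)$ fixes $\normal(0,\sigma^2)$ and sends $\normal(\pm 1,\sigma^2)$ to $\normal(\pm a,\sigma^2)$, hence maps $P_1\mapsto P_a$ and $P_{-1}\mapsto P_{-a}$; the post-processing inequality then yields $H_\alpha(P_a\|P_{-a})\le H_\alpha(P_1\|P_{-1})$. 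For general $(a,b)$, translating both $P_a$ and $P_b$ by $-(a+b)/2$ rewrites $H_\alpha(P_a\|P_b)$ as a divergence between the mixtures $(1-\gamma)\normal(-m,\sigma^2)+\gamma\normal(\pm c,\sigma^2)$ with $m=(a+b)/2$ and $c=(a-b)/2\in[0,1]$; what is left is the \emph{recentering} claim that the symmetric placement $m=0$ of the shared anchor is the worst, i.e.\ $H_\alpha(P_a\|P_b)\le H_\alpha(P_c\|P_{-c})$, after which the symmetric case finishes the proof since $|c|\le 1$.

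I expect the recentering inequality to be the main obstacle: unlike the other steps, it does not seem to follow from a post-processing channel or a convexity identity, since any channel that fixes the anchor $\normal(0,\sigma^2)$ is forced to act symmetrically on the two signal components. I would prove it by a direct analysis of the density ratio $dP_a/dP_b$, whose superlevel set $\{y : dP_a/dP_b(y) > \alpha\}$ is an interval or a half-line; differentiating $H_\alpha$ with respect to the anchor location and using log-concavity of the Gaussian density should show that moving the anchor toward the midpoint of the two signal means monotonically increases $H_\alpha$. Combining the three steps gives $\sup_{E\simsubstitute E'} H_\alpha(\Mpoisson(E)\|\Mpoisson(E')) = H_\alpha(P_1\|P_{-1}) = H_\alpha(\Mpoisson(\D)\|\Mpoisson(\D'))$ for all $\alpha\ge 0$, which is precisely the statement that $(\D,\D')$ is a dominating pair of datasets.
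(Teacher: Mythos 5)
Your first reduction---from arbitrary neighbors $(x,a)\simsubstitute(x,a')$ to the single-record pair $P_a:=(1-\gamma)\normal(0,\sigma^2)+\gamma\normal(a,\sigma^2)$ versus $P_b$---is correct and is essentially the paper's step: the paper realizes the same reduction through an explicit post-processing channel that adds an independently Poisson-subsampled sum of the common records, while you use conditioning on the common-coordinate subsample, joint convexity of $H_\alpha$, and translation invariance; these are interchangeable. Your contraction-channel argument for the symmetric case $b=-a$ is also correct. The genuine gap is exactly the step you flag as the main obstacle, the ``recentering'' inequality $H_\alpha(P_a\|P_b)\le H_\alpha(P_c\|P_{-c})$ with $c=(a-b)/2$: this is not merely unproven, it is false, even under the constraint $a,b\in[-1,1]$. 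Take $a=1$, $b=0.8$ (so $c=0.1$), $\gamma=0.01$, $\sigma=0.05$, $\alpha=e$. The anchor $\normal(0,\sigma^2)$ is then $16\sigma$--$20\sigma$ away from both signal components of $P_1$ and $P_{0.8}$, so its contribution to the positive part is negligible and
\[
H_\alpha(P_1\|P_{0.8})\;\approx\;\gamma\,H_\alpha\bigl(\normal(1,\sigma^2)\,\|\,\normal(0.8,\sigma^2)\bigr)\;\approx\;9\times10^{-3},
\]
whereas in $H_\alpha(P_{0.1}\|P_{-0.1})$ the anchor sits on top of the two signal components and its pointwise-negative term $(1-\gamma)(1-\alpha)$ times the $\normal(0,\sigma^2)$ density suppresses the positive part, giving a value below $6\times10^{-4}$. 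More generally, for fixed half-separation $c$ and $\alpha>1$, moving the shared anchor away from the signals drives $H_\alpha$ up toward $\gamma H_\alpha(\normal(c,\sigma^2)\|\normal(-c,\sigma^2))$, which strictly exceeds the centered value; so the monotonicity you hoped to obtain by differentiating in the anchor location goes the wrong way, and no argument routed through the intermediate pair $(P_c,P_{-c})$ can succeed. A correct proof must compare $(P_a,P_b)$ directly with $(P_1,P_{-1})$, i.e., enlarge the separation to $2$ simultaneously with the recentering.

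For context, the paper's own write-up proves only the first reduction (via its post-processing lemma) and then asserts the single-record extremality $H_\alpha(\Mpoisson(\0,a)\|\Mpoisson(\0,a'))\le H_\alpha(\Mpoisson(\0,1)\|\Mpoisson(\0,-1))$ without further argument, so you have correctly located where the real content of the proposition lies; but your proposal as written does not close that step, and its key intermediate claim is refuted by the example above.
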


Proposition~\ref{prop:wc_for_poisson_substitution} simply confirms that the pair of distributions considered by \cite{koskela20-gaussian-variants} does indeed give correct guarantees as it is a dominating pair of distributions. 
However, as far as we are aware, no formal proof existed anywhere.

The proof relies mainly on the following data-processing inequality, which can also be seen as closure of privacy under post-processing.

\begin{lemma}
    \label{lem:postprocessing}
    Let $P$ and $Q$ be any distributions on $\mathcal{X}$ and let $\Proc : \mathcal{X} \to \mathcal{Y}$ be a randomized procedure. Denote by $\Proc{P}$ the distribution of $\Proc(X)$ for $X \sim P$. Then, for any $\alpha \geq 0$,
    \begin{align*}
        H_\alpha(\Proc{P} || \Proc{Q})
            \leq H_\alpha(P || Q).
    \end{align*}
\end{lemma}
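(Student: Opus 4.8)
The plan is to reduce the inequality to the well-known variational (``Neyman--Pearson'') form of the hockey-stick divergence and then push a test function through the channel. Concretely, I would first record the identity
\[
  H_\alpha(P || Q) \;=\; \sup_{S} \bigl( P(S) - \alpha\, Q(S) \bigr),
\]
where the supremum ranges over all measurable $S \subseteq \mathcal{X}$, and then observe that every ``rejection set'' for the post-processed pair pulls back to a randomized test for the original pair without increasing $P(S) - \alpha Q(S)$.

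For the identity: given any measurable $S$, we have $P(S) - \alpha Q(S) = \int_S (\tfrac{dP}{dQ} - \alpha)\, dQ \le \int_{\mathcal{X}} \max\{\tfrac{dP}{dQ} - \alpha, 0\}\, dQ = H_\alpha(P || Q)$ (with any $Q$-singular component of $P$ contributing its full mass to both sides), and equality is attained by $S^\star := \{ \tfrac{dP}{dQ} > \alpha \}$; hence the supremum equals $H_\alpha(P||Q)$. Now write $\Proc$ as a Markov kernel $K(x,\cdot)$, so that $\Proc{P}(T) = \int_{\mathcal{X}} K(x,T)\, dP(x)$ and similarly for $Q$. For a fixed measurable $T \subseteq \mathcal{Y}$, put $g(x) := K(x, T) \in [0,1]$; then
\[
  \Proc{P}(T) - \alpha\, \Proc{Q}(T) = \int_{\mathcal{X}} g \Bigl( \frac{dP}{dQ} - \alpha \Bigr) dQ \le \int_{\mathcal{X}} \max\Bigl\{ \frac{dP}{dQ} - \alpha, 0 \Bigr\} dQ = H_\alpha(P || Q),
\]
where the middle inequality uses $0 \le g \le 1$ pointwise. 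Taking the supremum over $T$ and invoking the identity for the pair $(\Proc{P}, \Proc{Q})$ yields $H_\alpha(\Proc{P} || \Proc{Q}) \le H_\alpha(P || Q)$, which is the claim.

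I expect the only delicate point --- and the thing I would write most carefully --- to be the measure-theoretic bookkeeping: representing the randomized procedure by a Markov kernel and checking measurability of $x \mapsto K(x,T)$ so the integrals are well defined, together with the handling of a component of $P$ singular with respect to $Q$ (such mass contributes in full to $H_\alpha(P||Q)$, and post-processing can only shrink the singular part of $\Proc{P}$ relative to $\Proc{Q}$, so the bound survives). Everything else is a one-line computation. As a sanity check, this lemma is exactly the data-processing inequality for the $f$-divergence associated with the convex function $t \mapsto \max\{t - \alpha, 0\}$, but the direct argument above keeps the proof self-contained.
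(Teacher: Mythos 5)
Your proof is correct and takes essentially the same route as the paper's: both reduce to the variational form $H_\alpha(P || Q) = \sup_{S}\bigl(P(S) - \alpha Q(S)\bigr)$ and push a test for the post-processed pair back through the channel. The only cosmetic difference is that you aggregate the randomness into a fractional test $g(x) = K(x,T) \in [0,1]$ via the Markov kernel, whereas the paper conditions on the procedure's internal randomness and uses the indicator of the preimage $\Proc^{-1}(E)$ before averaging; these encode the same inequality, and your explicit note about a possible $Q$-singular component of $P$ is a reasonable extra precaution given the density-based definition of $H_\alpha$ used here.
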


\begin{proof}
    Note that the randomness of $\Proc$ is independent of the randomness of the underlying distribution. Now, for any event $E \subseteq \mathcal{Y}$,
    \begin{align*}
        (\Proc{P})&(E) - \alpha(\Proc{Q})(E) \\
            & = \mathbb{E}_{\Proc}[\mathbb{P}_{X \sim P}(\Proc(X) \in E)] - \\&\quad\quad\quad\alpha\mathbb{E}_{\Proc}[\mathbb{P}_{X \sim Q}(\Proc(X) \in E)] \\
            & = \mathbb{E}_{\Proc}[P(\Proc^{-1}(E))] -\\&\quad\quad\quad \alpha\mathbb{E}_{\Proc}[Q(\Proc^{-1}(E)] \\
            & = \mathbb{E}_{\Proc}[P(\Proc^{-1}(E)) - \alpha Q(\Proc^{-1}(E)] \\
            & \leq \mathbb{E}_{\Proc}[H_\alpha(P || Q)] 
             = H_\alpha(P || Q),
    \end{align*}
    so the result holds since
    \begin{align*}
        H_\alpha(\Proc{P} || \Proc{Q})
            = \sup_{E \subseteq \mathcal{Y}} (\Proc{P})(E) - \alpha(\Proc{Q})(E).
    \end{align*}
\end{proof}

\begin{proof}[Proof of Proposition~\ref{prop:wc_for_poisson_substitution}]
    Our main goal is to argue that $\D := (0, \dots, 0, 1)$ and $\D' := (0, \dots, 0, -1)$ form a dominating pair of datasets for $\Mpoisson$. To that end, consider any $\simsubstitute$-neighbours that differ, without loss of generality, in the last entry, say $(x, a)$ and $(x, a')$. We leverage postprocessing to show that $(\Mpoisson(x, a), \Mpoisson(x, a'))$ is dominated by $(\Mpoisson(\0, a), \Mpoisson(\0, a'))$. Indeed, consider
    \begin{align*}
        \Proc(y) := y + \sum_{i = 1}^{|\hat{x}|}\hat{x}_i
    \end{align*}
    where $\hat{x}$ is randomly drawn from $x$ by $\textrm{Poisson}(\gamma)$-subsampling. Now, sampling $\Mpoisson(\0, a)$ is equivalent to drawing $\hat{a}$ from the singleton dataset $(a)$ via $\textrm{Poisson}(\gamma)$ and returning a sample from $\normal(\sum_{i = 1}^{|\hat{a}|}\hat{a}_i, \sigma^2)$. Since the normal distribution satisfies $\normal(a, \sigma^2) + b = \normal(a + b, \sigma^2)$, sampling $\Proc(\Mpoisson(\0, a))$ is equivalent to sampling
    \begin{align*}
        \normal\left(\sum_{i = 1}^{|\hat{x}|} \hat{x}_i + \sum_{i = 1}^{|\hat{a}|} \hat{a}_i, \sigma^2\right)
    \end{align*}
    where $\hat{x}$ is $\textrm{Poisson}(\gamma)$-subsampled from $x$ and $\hat{a}$ is $\textrm{Poisson}(\gamma)$-subsampled from $(a)$. But, by independence, $(\hat{x}, \hat{a})$ is a $\textrm{Poisson}(\gamma)$-subsample drawn from $(x, a)$, so, in conclusion, $\Proc(\Mpoisson(\0, a)) = \Mpoisson(x, a)$. By an analogous argument, we have that $\Proc(\Mpoisson(\0, a')) = \Mpoisson(x, a')$ and hence
    \begin{align*}
        H_\alpha&(\Mpoisson(x, a) || \Mpoisson(x, a')) \\
            & = H_\alpha(\Proc(\Mpoisson(\0, a)) || \Proc(\Mpoisson(\0, a'))) \\
            & \leq H_\alpha(\Mpoisson(\0, a) || \Mpoisson(\0, a')) \tag{Lemma~\ref{lem:postprocessing}} \\
            & \leq H_\alpha(\Mpoisson(\0, 1) || \Mpoisson(\0, -1)).
    \end{align*}
\end{proof}

In the rest of the section we focus on sampling without replacement.
We start by restating another result from \cite{zhu22-optimal-characteristic-functions} which we use throughout the section.

\begin{theorem}[Proposition 30 of \cite{zhu22-optimal-characteristic-functions}]
  \label{thm:substitution-prop30}
  If $(P, Q)$ dominates $\M$ under substitution for datasets of size $\gamma n$, then under the substitution neighbourhood for datasets of size $n$, we have
  \[
    \delta(\varepsilon) \leq 
    \begin{cases} 
     H_{e^\varepsilon} ((1-\gamma)Q + \gamma P || P) & \text{if } e^\varepsilon \geq 1; \\
     H_{e^{\varepsilon}} (P || (1-\gamma)P + \gamma Q) & \text{if } 0 < e^\varepsilon < 1 ,
    \end{cases}
    \]
    where $\delta(\varepsilon)$ is the largest hockey-stick divergence of order $e^\varepsilon$ for $\Msubset$ on neighbouring datasets.
\end{theorem}

Next, we address a mistake made in related work.
We introduced the distributions considered by~\cite{koskela20-gaussian-variants} for Poisson subsampling above and we show in Proposition~\ref{prop:wc_for_poisson_substitution} that it is a dominating pair of distributions. 
However, \cite{koskela20-gaussian-variants} claimed in their paper that the privacy curves are identical for the two sampling schemes under the substitution relation which is unfortunately incorrect.


They considered datasets where all but one entry has a value of $0$.
This results in correct distributions for Poisson subsampling but for sampling without replacement, we instead consider the datasets $\D := (-1, \dots, -1, 1)$ and $\D' := (-1, \dots, -1, -1)$.
With these datasets the values of $H_\alpha(\Msubset(\D) || \Msubset(\D'))$ and $H_\alpha(\Msubset(\D') || \Msubset(\D))$ match the cases of the upper bound in Theorem~\ref{thm:substitution-prop30} for $\alpha \geq 1$ and $\alpha < 1$, respectively.
This can be easily verified by following the steps of the proof of Proposition~\ref{prop:wc_for_add_remove_separately} for sampling without replacement.

We can use the datasets above to compute tight privacy guarantees for a single iteration. 
However, composition is more complicated since neither of the two directions corresponds to a dominating pair of distributions.
One might hope that we could simply compute the hockey-stick divergence of the self-composed distributions in both directions and use the maximum similar to the add/remove case.
Unfortunately, for some mechanisms that is not sufficient because we can combine the directions unlike with the add and remove cases.
Next we give a minimal counterexample using the Laplace mechanism to showcase this challenge.

\begin{figure}[t]
  \centering
  \includegraphics[width=0.5\linewidth]{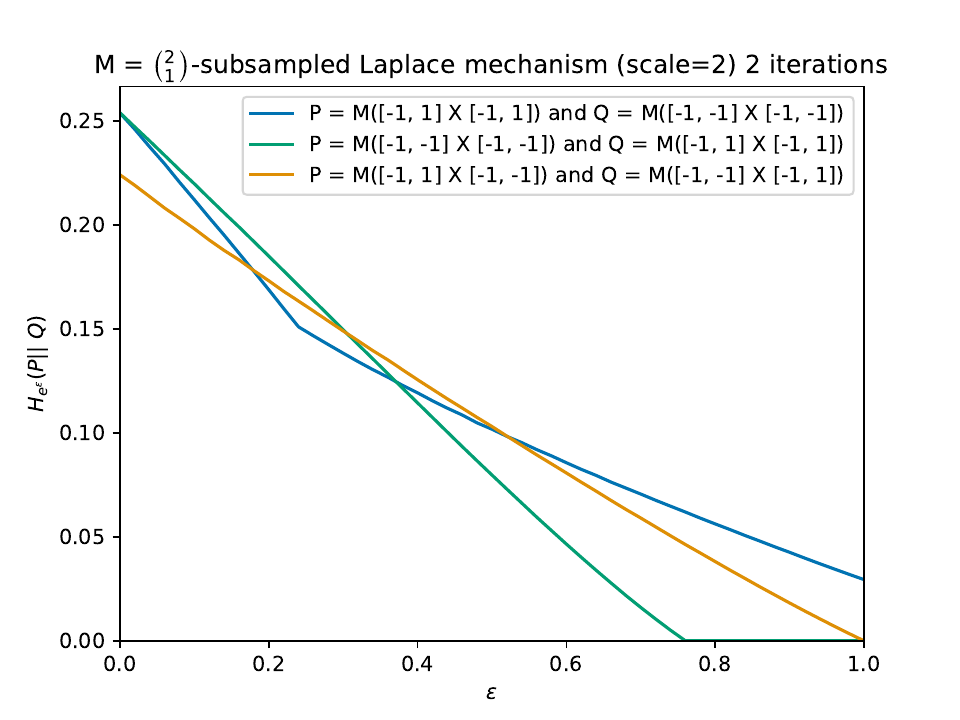}
  \caption{
    Hockey-stick divergence of the Laplace mechanism when sampling without replacement under $\simsubstitute$.
    The worst-case pair of datasets depends on the value of $\varepsilon$.
  } 
  \label{fig:laplacesubWOR}
\end{figure}

We consider datasets of size $2$ and sample batches with a single element such that $\gamma=0.5$. 
Let $x_1$ and $x_2$ denote the two data points in $\D$ and without loss of generality assume that $x_1 = x'_1$ and $x_2 \neq x'_2$, where $x'_1$ and $x'_2$ are the corresponding data points in $\D'$.
We apply the subsampled Laplace mechanism with a scale of $2$ and perform $2$ queries where $x_1$ has the value $-1$ for both queries.
Let $P := 0.5 \cdot \mathrm{Lap}(-1, 2) + 0.5 \cdot \mathrm{Lap}(1, 2)$ and $Q := \mathrm{Lap}(-1, 2)$.
That is, $P$ and $Q$ are the distributions for running one query of $\Msubset(\D)$ with $x_2$ having value $1$ or $-1$, respectively. 
Then $H_{e^\varepsilon}(P \times P || Q \times Q)$ is the hockey-stick divergence for the mechanism if $x_2$ has value $1$ for both queries and $x'_2$ has value $-1$ for both queries.
Similarly, $H_{e^\varepsilon}(Q \times Q || P \times P)$ is the divergence when $x_2$ has value $-1$ for both queries and $x'_2$ has value $1$ for both queries.

The two hockey-stick divergences above are similar to those for the remove and add neighbouring relations. 
However, we also have to consider $H_{e^\varepsilon}(P \times Q || Q \times P)$ in the case of substitution.
These distributions correspond to the case when $x_2$ has a value of $1$ for the first query and $-1$ for the second query, and $x'_2$ has a value of $-1$ for the first query and $1$ for the second query.
Throughout the paper, we always considered data points with a single value for all iterations. 
That is sufficient for all of our other results, but here we need the data point to have different values for each query.
Note that it is natural that the values of data points differ between iterations. 
As an example consider the application of DP-SGD. 
Here it is expected that a data point changes between iterations, because the value represents the gradient which typically changes after we update the model.
In the datasets above, the gradient for $x_2$ has the opposite direction compared to the gradient for $x_1$ in the first iteration, and they have the same gradient in the second iteration. 
This is flipped when we swap $x_2$ with $x'_2$.
This type of setup only occurs under substitution. 
Under add or remove the gradient for the introduced data point has the opposite direction of the other gradients for all other data points in all iterations.

Fig.~\ref{fig:laplacesubWOR} shows the hockey-stick divergence as a function of $\varepsilon$ for the three pairs of neighbouring datasets.
The largest divergence depends on the value of $\varepsilon$ with all three divergences being the maximum for some interval.
This counterexample shows that we cannot upper bound the hockey-stick divergence for the subsampled Laplace mechanism as $\max \{H_{e^\varepsilon}(P^k || Q^k), H_{e^\varepsilon}(Q^k || P^k)\}$ for $k > 1$.
For $k$ compositions, we have to consider $k + 1$ ways of combining $P$ and $Q$.
This significantly slows down the accountants in contrast to the $2$ cases required for add/remove.
Worse still, we do not have a proof that one of $k + 1$ cases is the worst-case pair of datasets for all $\varepsilon \geq 0$. In the following section, we show how to construct an accountant to upper bound $\delta_{\Msubset^k}^{\simsubstitute}(\varepsilon)$ using the framework of dominating pairs of distributions. 

\section{Constructing a Dominating Pair of Distributions for the Gaussian Mechanism}
\label{sec:connect-the-dots-approach}

In this section we consider the problem of computing privacy curves for the Gaussian mechanism under $\simsubstitute$ when sampling without replacement. 
As shown in Section~\ref{sec:substitution} computing tight parameters is challenging in this setting because we do not know which datasets result in the largest hockey-stick divergence.
However, we can still compute an upper bound on the privacy curve using a dominating pair of distributions.

We modified the implementation of the algorithm introduced by~\cite{DoroshenkoGKKM22} in the Google DP library to construct the PLDs (Privacy Loss Distribution object).
The algorithm constructs an approximation of the PLD from the hockey-stick divergence between the pair of distributions at a range of values for $\varepsilon$. 
From Theorem~\ref{thm:substitution-prop30} we know that the direction of the pair of distributions yielding the largest hockey-stick divergence for the mechanism of a single iteration differs for $\alpha$ below and above $1$.
We construct a new PLD by combining the two directions at $\alpha = 1$ or $\varepsilon = 0$. 

See the left-side plot of Fig.~\ref{fig:connect-the-dots-combined}
for a visualization of how our construction uses the point-wise maximum of the hockey-stick divergence for a single iteration.
This construction represents a dominating pair of distributions and as such it is sufficient to compute a dominating pair of distributions for the composed mechanism using self-composition by Theorem~\ref{thm:dominating-pair-dominates-composition}.

\begin{figure*}[!htb]
    \centering
    \begin{minipage}{0.49\linewidth}
        \centerline{\includegraphics[width=\linewidth]{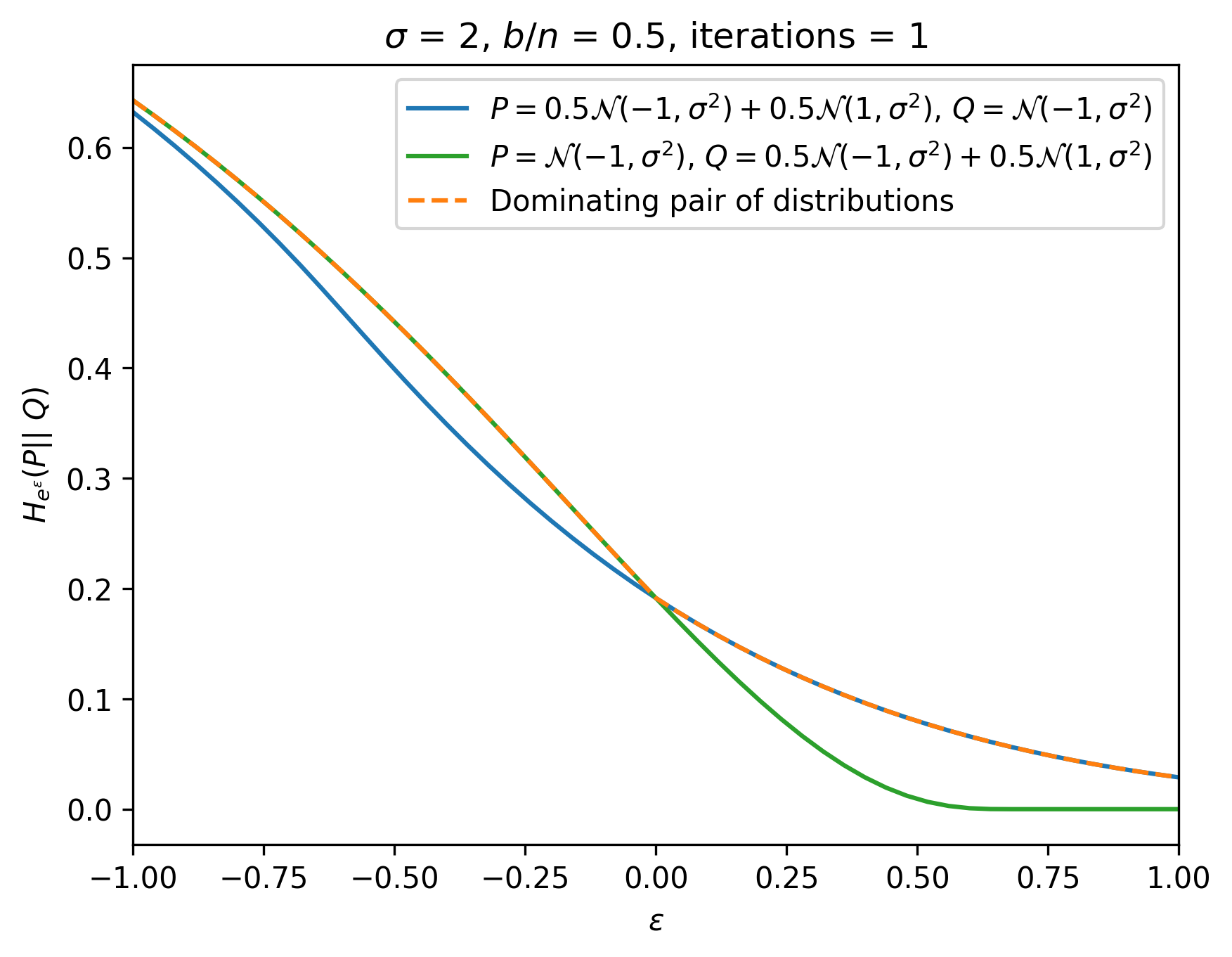}}
    \end{minipage}%
    \begin{minipage}{0.49\linewidth}
        \centerline{\includegraphics[width=\linewidth]{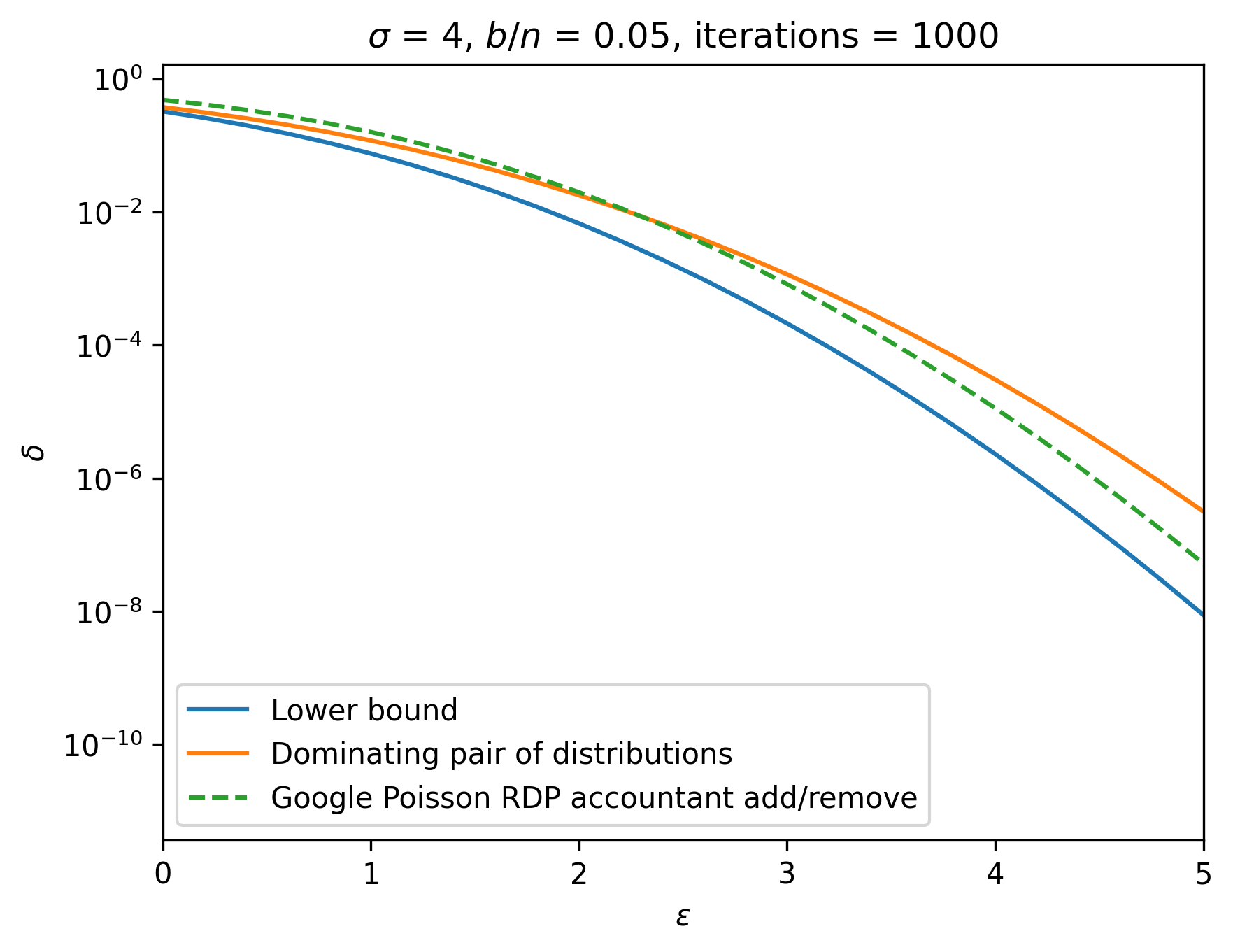}}
    \end{minipage}
    \caption{Hockey-stick divergence for the Gaussian mechanism under substitution when sampling without replacement using a dominating pair of distributions.
    The dominating pair of distributions is constructed using a point-wise maximum of the privacy curve for a single iteration as seen in the left plot.
    The right plot compares the privacy curve from self-composing the dominating pair of distributions with a lower bound obtained from self-composing the PLD that corresponds to the blue line in the left plot.
    The dotted line for the RDP accountant is used for reference of scale. 
    The difference between the blue and the dotted line corresponds to the difference between using the PLD and RDP accountants for Poisson subsampling under add/remove.
    }
    \label{fig:connect-the-dots-combined}
\end{figure*}

The right-side plot of Fig.~\ref{fig:connect-the-dots-combined}
shows the privacy curve obtained from self-composing the PLD for the dominating pair of distributions with parameters $\sigma = 4$, $\gamma = 0.05$, and $1000$ iterations.
The blue line is the privacy curve under $\simremove$ and also serves as a lower bound for the true privacy curve.
Note that the orange line would also be the privacy curve achieved by this technique under the add/remove relation if we did not consider the add and remove relations separately. 

The gap between the upper and lower bound motivates future work for understanding the worst-case datasets. 
{ 
Similar to the add/remove case we conjecture that the subsampled Gaussian mechanism behaves well under composision.
Specifically, we conjecture that the privacy curve of the composed subsampled Gaussian mechanism under $\simsubstitute$ matches the curve under $\simremove$ for $\varepsilon \geq 0$.
It seems likely that this is the case if Conjecture~\ref{conj:gaussian-remove-vs-add} holds.
However, if Conjecture~\ref{conj:gaussian-remove-vs-add} does not hold the above statement also does not hold.
}

\section{Discussion}
\label{sec:discussion}

We have highlighted two issues that arise in the practice of privacy accounting.

First, we have given a concrete example where the worst-case dataset (for $\varepsilon \geq 0$) of a subsampled mechanism fails to be a worst-case dataset once that mechanism is composed. 
Care should therefore be taken to ensure that the privacy accountant computes privacy guarantees with respect to a true worst-case dataset for a given choice of $\varepsilon$.

Secondly, we have shown that the privacy parameters for a subsampled and composed mechanism can differ significantly for different subsampling schemes. 
This can be problematic if the privacy accountant is assuming a different subsampling procedure from the one actually employed. 
We have shown this in the case of Poisson sampling and sampling without replacement. The same phenomenon is shown in concurrent work \cite{chua2024private} to occur when comparing Poisson sampling to shuffling as well.
Computing tight privacy guarantees for the shuffled Gaussian mechanism remains an important open problem.
It is best practice to ensure that the implemented subsampling method matches the accounting method. 
When this is not practical, the discrepancy should be disclosed.

As noted in Section~\ref{sec:prelim} we treated the sample size and expected batch size as public in line with previous work. 
However, in practice this must be handled with additional care when deploying DP systems under the add/remove neighbouring relation.
We considered the sample size $n$ known when computing privacy guarantees using the hockey-stick divergence under sampling without replacement.
Similarly, the expected batch size, that is $\gamma \cdot n$, is typically considered known under Poisson subsampling because it is used in DP-SGD for normalizing gradients.
In practice analysts might choose the desired expected batch size and compute the corresponding value of $\gamma$.
But in both cases the actual dataset size is accessed which violates the differential privacy definition in the add/remove model.
Instead, we can privately estimate the dataset size. 
This is only a small detail under Poisson subsampling, but in the case of sampling without replacement the computation of privacy parameters itself depends on the estimate. 
Furthermore, the mechanism is undefined as-is in the unlikely event where we accidentally set $b > n$.
As far as we are aware there is no consensus best practice for dealing with those nuances of add/remove. 
Since the privacy parameters decrease as a function of $n$, we get a valid bound on the privacy parameters when our private estimate is higher than the true $n$.
We could therefore bias our private estimate to avoid issues when underestimate $n$.

We ignored the technical details related to the sample size in the paper because it significantly simplifies presentation. 
We believe that this decision is justified, as one of our key results is that sampling without replacement sometimes has much worse privacy guarantees than Poisson subsampling. 
This conclusion is only amplified if we take into account the additional challenges of estimating the dataset size.
Finally, we note that studying the hockey-stick divergence of the distributions directly under add and remove is useful for any DP mechanism where we do not have an exact analyze under substitution.
We can bound the privacy parameters for substitution using a standard group privacy argument although this does not always give us tight results.

We conclude with two recommendations for practitioners applying privacy accounting in the DP-SGD setting. 
We recommend disclosing the privacy accounting hyperparameters for the sake of reproducibility (see Section 5.3.3 of \cite{google23dpfy-ml} for a list of suggestions). 
Finally, we also recommend that, when comparisons are made between DP-SGD mechanisms, the privacy accounting for both should be re-run
for the sake of fairness.
As an example, consider a situation where a new paper computes privacy parameters using a tight PLD accountant, and compares the performance against an older paper that used a less tight accountant based on RDP.
In this case the newer paper gains an advantage based on the improved accounting technique. 
Ideally the privacy parameters of the older paper is recomputed to ensure that the comparison between the DP-SGD mechanisms is not affected by the loose accountant.
This is of course only possible if the older paper followed our first recommendation of disclosing the privacy accounting hyperparameters.
In general, tight accounting techniques are an important tool to ensure that conclusions about the behavior of DP-SGD under various hyperparameters are not an artifact of an inaccurate accountant.

\section*{Acknowledgements}

We thank Pritish Kamath for insightful discussion.
We thank anonymous reviewers whose suggestions helped improve the presentation of this paper.
Christian Janos Lebeda was affiliated with IT University of Copenhagen and Basic Algorithms Research Copenhagen (BARC) during parts of this work, supported by the VILLUM Foundation grant 16582.
Gautam Kamath was supported by a Canada CIFAR AI Chair, an NSERC Discovery Grant, and an unrestricted gift from Google.
Matthew Regehr was supported by the Ontario Graduate Scholarship program.

\bibliographystyle{alpha}
\bibliography{literature}

\newcommand{\etalchar}[1]{$^{#1}$}
\begin{thebibliography}{WMW{\etalchar{+}}23}

\bibitem[ACG{\etalchar{+}}16]{AbadiCGMMTZ16}
Martin Abadi, Andy Chu, Ian Goodfellow, H~Brendan McMahan, Ilya Mironov, Kunal
  Talwar, and Li~Zhang.
\newblock Deep learning with differential privacy.
\newblock In {\em Proceedings of the 2016 ACM Conference on Computer and
  Communications Security}, CCS '16, pages 308--318, New York, NY, USA, 2016.
  ACM.

\bibitem[BBG18]{BalleBG18}
Borja Balle, Gilles Barthe, and Marco Gaboardi.
\newblock Privacy amplification by subsampling: Tight analyses via couplings
  and divergences.
\newblock In {\em Advances in Neural Information Processing Systems 31},
  NeurIPS '18, pages 6277--6287. Curran Associates, Inc., 2018.

\bibitem[BS16]{BunS16}
Mark Bun and Thomas Steinke.
\newblock Concentrated differential privacy: Simplifications, extensions, and
  lower bounds.
\newblock In {\em Proceedings of the 14th Conference on Theory of
  Cryptography}, TCC '16-B, pages 635--658, Berlin, Heidelberg, 2016. Springer.

\bibitem[BST14]{BassilyST14}
Raef Bassily, Adam Smith, and Abhradeep Thakurta.
\newblock Private empirical risk minimization: Efficient algorithms and tight
  error bounds.
\newblock In {\em Proceedings of the 55th Annual IEEE Symposium on Foundations
  of Computer Science}, FOCS '14, pages 464--473, Washington, DC, USA, 2014.
  IEEE Computer Society.

\bibitem[BW18]{balleWangAnalyticalGaussian}
Borja Balle and Yu{-}Xiang Wang.
\newblock Improving the gaussian mechanism for differential privacy: Analytical
  calibration and optimal denoising.
\newblock In {\em {ICML}}, volume~80 of {\em Proceedings of Machine Learning
  Research}, pages 403--412. {PMLR}, 2018.

\bibitem[CGK{\etalchar{+}}24]{chua2024private}
Lynn Chua, Badih Ghazi, Pritish Kamath, Ravi Kumar, Pasin Manurangsi, Amer
  Sinha, and Chiyuan Zhang.
\newblock How private are dp-sgd implementations?
\newblock In {\em Proceedings of the 41st International Conference on Machine
  Learning}, ICML '24. JMLR, Inc., 2024.

\bibitem[DBH{\etalchar{+}}22]{DeBHSB22}
Soham De, Leonard Berrada, Jamie Hayes, Samuel~L Smith, and Borja Balle.
\newblock Unlocking high-accuracy differentially private image classification
  through scale.
\newblock {\em arXiv preprint arXiv:2204.13650}, 2022.

\bibitem[DGK{\etalchar{+}}22]{DoroshenkoGKKM22}
Vadym Doroshenko, Badih Ghazi, Pritish Kamath, Ravi Kumar, and Pasin
  Manurangsi.
\newblock Connect the dots: Tighter discrete approximations of privacy loss
  distributions.
\newblock {\em Proc. Priv. Enhancing Technol.}, 2022(4):552--570, 2022.

\bibitem[DMNS06]{DworkMNS06}
Cynthia Dwork, Frank McSherry, Kobbi Nissim, and Adam Smith.
\newblock Calibrating noise to sensitivity in private data analysis.
\newblock In {\em Proceedings of the 3rd Conference on Theory of Cryptography},
  TCC '06, pages 265--284, Berlin, Heidelberg, 2006. Springer.

\bibitem[DR16]{DworkR16}
Cynthia Dwork and Guy~N. Rothblum.
\newblock Concentrated differential privacy.
\newblock {\em arXiv preprint arXiv:1603.01887}, 2016.

\bibitem[DRS19]{DongRS19}
Jinshuo Dong, Aaron Roth, and Weijie~J. Su.
\newblock Gaussian differential privacy.
\newblock {\em arXiv preprint arXiv:1905.02383}, 2019.

\bibitem[DRV10]{DworkRV10}
Cynthia Dwork, Guy~N. Rothblum, and Salil Vadhan.
\newblock Boosting and differential privacy.
\newblock In {\em Proceedings of the 51st Annual IEEE Symposium on Foundations
  of Computer Science}, FOCS '10, pages 51--60, Washington, DC, USA, 2010. IEEE
  Computer Society.

\bibitem[GLW21]{GopiLW21}
Sivakanth Gopi, Yin~Tat Lee, and Lukas Wutschitz.
\newblock Numerical composition of differential privacy.
\newblock In {\em Advances in Neural Information Processing Systems 34},
  NeurIPS '21, pages 11631--11642. Curran Associates, Inc., 2021.

\bibitem[Goo20]{GoogleDP}
Google’s differential privacy libraries. dp accounting library, 2020.

\bibitem[KJH20]{koskela20-gaussian-variants}
Antti Koskela, Joonas J\"alk\"o, and Antti Honkela.
\newblock Computing tight differential privacy guarantees using fft.
\newblock In Silvia Chiappa and Roberto Calandra, editors, {\em Proceedings of
  the Twenty Third International Conference on Artificial Intelligence and
  Statistics}, volume 108 of {\em Proceedings of Machine Learning Research},
  pages 2560--2569. PMLR, 26--28 Aug 2020.

\bibitem[KJPH21]{koskela21-one-dimensional-outputs}
Antti Koskela, Joonas J{\"{a}}lk{\"{o}}, Lukas Prediger, and Antti Honkela.
\newblock Tight differential privacy for discrete-valued mechanisms and for the
  subsampled gaussian mechanism using {FFT}.
\newblock In {\em {AISTATS}}, volume 130 of {\em Proceedings of Machine
  Learning Research}, pages 3358--3366. {PMLR}, 2021.

\bibitem[KOV15]{KairouzOV15}
Peter Kairouz, Sewoong Oh, and Pramod Viswanath.
\newblock The composition theorem for differential privacy.
\newblock In {\em Proceedings of the 32nd International Conference on Machine
  Learning}, ICML '15, pages 1376--1385. JMLR, Inc., 2015.

\bibitem[Mir17]{Mironov17}
Ilya Mironov.
\newblock R{\'e}nyi differential privacy.
\newblock In {\em Proceedings of the 30th IEEE Computer Security Foundations
  Symposium}, CSF '17, pages 263--275, Washington, DC, USA, 2017. IEEE Computer
  Society.

\bibitem[MTZ19]{mironov2019renyi-addremove-poisson}
Ilya Mironov, Kunal Talwar, and Li~Zhang.
\newblock R\'enyi differential privacy of the sampled gaussian mechanism.
\newblock {\em arXiv preprint arXiv:1908.10530}, 2019.

\bibitem[PHK{\etalchar{+}}23]{google23dpfy-ml}
Natalia Ponomareva, Hussein Hazimeh, Alex Kurakin, Zheng Xu, Carson Denison,
  H.~Brendan McMahan, Sergei Vassilvitskii, Steve Chien, and Abhradeep~Guha
  Thakurta.
\newblock How to dp-fy {ML:} {A} practical guide to machine learning with
  differential privacy.
\newblock {\em J. Artif. Intell. Res.}, 77:1113--1201, 2023.

\bibitem[SCS13]{SongCS13}
Shuang Song, Kamalika Chaudhuri, and Anand~D Sarwate.
\newblock Stochastic gradient descent with differentially private updates.
\newblock In {\em Proceedings of the 2013 IEEE Global Conference on Signal and
  Information Processing}, GlobalSIP '13, pages 245--248, Washington, DC, USA,
  2013. IEEE Computer Society.

\bibitem[SMM19]{sommerPLD}
David~M. Sommer, Sebastian Meiser, and Esfandiar Mohammadi.
\newblock Privacy loss classes: The central limit theorem in differential
  privacy.
\newblock {\em Proc. Priv. Enhancing Technol.}, 2019(2):245--269, 2019.

\bibitem[Ste22]{steinkeCompositionAmplification}
Thomas Steinke.
\newblock Composition of differential privacy \& privacy amplification by
  subsampling.
\newblock {\em arXiv preprint arXiv:2210.00597}, 2022.

\bibitem[War65]{Warner65}
Stanley~L. Warner.
\newblock Randomized response: A survey technique for eliminating evasive
  answer bias.
\newblock {\em Journal of the American Statistical Association},
  60(309):63--69, 1965.

\bibitem[WBK20]{wang2018renyi-substitute-wor}
Yu{-}Xiang Wang, Borja Balle, and Shiva~Prasad Kasiviswanathan.
\newblock Subsampled r{\'{e}}nyi differential privacy and analytical moments
  accountant.
\newblock {\em J. Priv. Confidentiality}, 10(2), 2020.

\bibitem[WMW{\etalchar{+}}23]{wang2023randomized}
Jiachen~T Wang, Saeed Mahloujifar, Tong Wu, Ruoxi Jia, and Prateek Mittal.
\newblock A randomized approach for tight privacy accounting.
\newblock {\em arXiv preprint arXiv:2304.07927}, 2023.

\bibitem[YSS{\etalchar{+}}21]{opacus}
Ashkan Yousefpour, Igor Shilov, Alexandre Sablayrolles, Davide Testuggine,
  Karthik Prasad, Mani Malek, John Nguyen, Sayan Ghosh, Akash Bharadwaj,
  Jessica Zhao, Graham Cormode, and Ilya Mironov.
\newblock Opacus: {U}ser-friendly differential privacy library in {PyTorch}.
\newblock {\em arXiv preprint arXiv:2109.12298}, 2021.

\bibitem[ZDW22]{zhu22-optimal-characteristic-functions}
Yuqing Zhu, Jinshuo Dong, and Yu-Xiang Wang.
\newblock Optimal accounting of differential privacy via characteristic
  function.
\newblock In Gustau Camps-Valls, Francisco J.~R. Ruiz, and Isabel Valera,
  editors, {\em Proceedings of The 25th International Conference on Artificial
  Intelligence and Statistics}, volume 151 of {\em Proceedings of Machine
  Learning Research}, pages 4782--4817. PMLR, 28--30 Mar 2022.

\end{thebibliography}

\end{document}